\documentclass[USenglish,oneside,twocolumn]{article}

\usepackage[utf8]{inputenc}

\usepackage{amssymb}
\usepackage{bbm}
\usepackage{tikz}
\usepackage{amsmath}
\usepackage{amssymb}
\usepackage{subfigure}
\usepackage{balance}

\newtheorem{theorem}{\bf Theorem}
\newtheorem{proof}{\bf Proof}

\newcommand{\rv}[1]{\boldsymbol{#1}}
\newcommand{\pr}[1]{\textrm{Pr}\{#1\}}

\newcommand{\minimize}[1]{\underset{#1}{\operatorname{\mathbf{min}}}\,\,}
\newcommand{\maximize}[1]{\underset{#1}{\operatorname{\mathbf{max}}}\,\,}

\newcommand{\argmin}[1]{\underset{#1}{\operatorname{\mathbf{argmin}}}\,\,}
\newcommand{\st}{\mathbf{s.\,t.}\,\,}

  
\begin{document}

  \author{Reza Shokri\\University of Texas at Austin\\shokri@cs.utexas.edu}
  \date{}
  \title{Privacy Games: Optimal User-Centric Data Obfuscation}
\maketitle

  \begin{abstract}
{
Consider users who share their data (e.g., location) with an untrusted service provider to obtain a personalized (e.g., location-based) service.  Data obfuscation is a prevalent user-centric approach to protecting users' privacy in such systems: the untrusted entity only receives a noisy version of user's data.  Perturbing data before sharing it, however, comes at the price of the users' utility (service quality) experience which is an inseparable design factor of obfuscation mechanisms.  The entanglement of the utility loss and the privacy guarantee, in addition to the lack of a comprehensive notion of privacy, have led to the design of obfuscation mechanisms that are either suboptimal in terms of their utility loss, or ignore the user's information leakage in the past, or are limited to very specific notions of privacy which e.g., do not protect against adaptive inference attacks or the adversary with arbitrary background knowledge.\\[8pt]
In this paper, we design user-centric obfuscation mechanisms that impose the minimum utility loss for guaranteeing user's privacy.  We optimize utility subject to a joint guarantee of differential privacy (indistinguishability) and distortion privacy (inference error).  This double shield of protection limits the information leakage through obfuscation mechanism as well as the posterior inference.  We show that the privacy achieved through joint differential-distortion mechanisms against optimal attacks is as large as the maximum privacy that can be achieved by either of these mechanisms separately.  Their utility cost is also not larger than what either of the differential or distortion mechanisms imposes.  We model the optimization problem as a leader-follower game between the designer of obfuscation mechanism and the potential adversary, and design adaptive mechanisms that anticipate and protect against optimal inference algorithms.  Thus, the obfuscation mechanism is optimal against any inference algorithm.  
}
\end{abstract}

\section{Introduction}\label{sec:introduction}

Data obfuscation is a mechanism for hiding private data by using misleading, false, or ambiguous information with the intention of confusing an adversary \cite{brunton2011vernacular}. A data obfuscation mechanism acts as a noisy information channel between a user's private data (secret) and an untrusted observer \cite{chatzikokolakis2008anonymity}. The noisier this channel is, the higher the privacy of the user will be. We focus on {\em user-centric} mechanisms, in which each user independently perturbs her secret before releasing it. Note that we are not concerned with database privacy, but with the privacy issues of releasing a single sensitive data sample (which however could be continuously shared over time). For example, consider a mobile user who is concerned about the information leakage through her location-based queries. In this case, obfuscation is the process of randomizing true locations so that the location-based server only receives the user's perturbed locations. 

By using obfuscation mechanisms, the {\em privacy} of a user and her {\em utility} experience are at odds with each other, as the service that the user receives is a function of what she shares with the service provider. There are problems to be addressed here. One is how to design an obfuscation mechanism that protects privacy of the user and imposes a {\em minimum} utility cost. Another problem is how to {\em guarantee} the user's privacy, despite the lack of a single best metric for privacy.

Regarding utility optimization, we define utility loss of obfuscation as the degradation of the user's service-quality expectation due to sharing the noisy data instead of its true value. Regarding privacy protection, there are two major metrics proposed in the literature. {\em Differential} privacy limits the information leakage through observation. But, it does not reflect the absolute privacy level of the user, i.e., what actually is learned about the user's secret. So, user would not know how close the adversary's estimate will get to her secret if she releases the noisy data, despite being sure that the relative gain of observation for adversary is bounded.  {\em Distortion} privacy (inference error) metric overcomes this issue and measures the error of inferring user's secret from the observation. This requires assumption of a prior knowledge which enables us to quantify absolute privacy, but is not robust to adversaries with arbitrary knowledge. Thus, either of these metrics alone is incapable of capturing privacy as a whole.

The problem of optimizing the tradeoff between privacy and utility has already been discussed in the literature, but notably for differential privacy in the context of statistical databases \cite{BrennerN10, GhoshRS09, GhoshRS12, GupteS10, LiHRMM10}. Regarding user-centric obfuscation mechanisms, \cite{ShokriTTHL12} solves the problem of maximizing distortion privacy under a constraint on utility loss. The authors construct the optimal adaptive obfuscation mechanism as the user's best response to the adversary's optimal inference in a Bayesian zero-sum game. In the same context, \cite{BordenabeCP14} solves the opposite problem, i.e., optimizing utility but for differential privacy. In both papers, the authors construct the optimal solutions using linear programming. 

Differential and distortion metrics for privacy complement each other. The former is sensitive to the likelihood of observation given data. The latter is sensitive to the joint probability of observation and data. Thus, by guaranteeing both, we encompass all the defense that is theoretically possible. In this paper, we model and solve the optimal obfuscation mechanism that: (i) minimizes utility loss, (ii) satisfies differential privacy, and (iii) guarantees distortion privacy, given a public knowledge on prior leakage about the secrets. We measure the involved metrics based on separate distance functions defined on the set of secrets. We model prior leakage as a probability distribution over secrets, that can be estimated from the user's previously released data. Ignoring such information leads to overestimating the user's privacy and thus designing a weak obfuscation mechanism (against adversaries who include such exposed information in their inference attack).\footnote{Note that the prior leakage is {\em not} equivalent to the adversary's knowledge. An adversary might have access to some information about the user's data through channels where the user is unaware of and has no control over. No protection mechanism can guarantee the distortion privacy against adversaries with arbitrary knowledge. Imagine the worst case where adversary knows the exact secret but through channels other than observation of the obfuscation mechanism. Therefore, our focus is on the user, and we incorporate what the user thinks has been leaked so far.}

A protection mechanism for distortion privacy metric can be designed such that it is optimal against a particular inference algorithm (e.g., Bayesian inference \cite{Berger85, Mackay03} as privacy attacks \cite{ShokriTLH11, TroncosoD09}). But, by doing so, it is not guaranteed that the promised privacy level can be achieved in practice: an adversarial observer can run inference attacks that are optimally tailored against the very obfuscation mechanism used by the user (regardless of the algorithm that the user assumes a priori). In fact, the adversary has the upper hand as he infers the user's secret (private information) {\em after} observing the output of the obfuscation mechanism. Thus, the obfuscation mechanisms must {\em anticipate} the adaptive inference attack that will follow the observation.  This enables us to design an obfuscation mechanism that is independent of the adversary's inference algorithm.

To address this concern, we adapt a game-theoretic notion of privacy for designing optimal obfuscation mechanisms against adaptive inference.  We formulate this game as a Stackelberg game and solve it using linear programming.\footnote{As opposed to \cite{ShokriTTHL12}, the game is not zero-sum anymore given that here user maximizes utility and observer minimizes privacy.}  We then add the differential privacy guarantee as a constraint in the linear program and solve it to construct the optimal mechanism. The result of using such obfuscation mechanism is that, not only the perturbed data samples are indistinguishable from the true secret (due to differential privacy bound), but also they cannot be used to accurately infer the secret using the prior leakage (due to distortion privacy measure). To the best of our knowledge, this work is the first to construct utility maximizing obfuscation mechanisms with such formal privacy guarantees. 

We illustrate the application of optimal protection mechanisms on a real data set of users' locations, where users want to protect their location privacy against location-based services. We evaluate the effects of privacy guarantees on utility cost. We also analyze the robustness of our optimal obfuscation mechanism against inference attacks with different algorithms and background knowledge. We show that our joint differential-distortion mechanisms are robust against adversaries with optimal attack and background knowledge. Moreover, the utility loss is at most equal to the utility loss of differential or distortion privacy, separately.

The novelty of this paper in the context of user-centric obfuscation is twofold:
\begin{itemize}
  \item We construct optimal obfuscation mechanisms that provably limit the user's privacy risk (i.e., by guaranteeing the user's distortion privacy) against {\em any} inference attack, with minimum utility cost.
  \item We design obfuscation mechanisms that optimally balance the tradeoff between utility and joint distortion-differential privacy. The solution is robust against adversary with arbitrary knowledge, yet it guarantees a required privacy given the user's estimation of the prior information leakage. 
\end{itemize}

\section{Related Work}\label{sec:relatedwork}

This paper contributes to the broad area of research that concerns designing obfuscation mechanisms, e.g., in the context of quantitative information flow \cite{MardzielAHC14, KopfB07, AlvimCPS12}, quantitative privacy in data sharing systems \cite{AndresBCP13, ShokriTTHL12, ShokriTLH11}, as well as differential privacy \cite{Dwork06, GhoshRS12, GupteS10, LiHRMM10}. The conflict between privacy and utility has been discussed in the literature \cite{BrickellS08, ioannidis2014privacy}. We build upon prevalent notions of privacy and protect it with respect to information leakage through both observation (differential privacy) and posterior inference (distortion privacy) while optimizing the tradeoff between utility and privacy. We also formalize this problem and solve it for user-centric obfuscation mechanisms, where it's each individual user who perturbs her secret data before sharing it with external observers (e.g., service providers).

The problem of perturbing data for differential and distortion privacy, separately, and optimizing their effect on utility has already been discussed in the literature. Original metric for differential privacy measures privacy of output perturbation methods in statistical databases \cite{Dwork06}. Assuming two statistical databases to be neighbor if they differ only in one entry, \cite{GhoshRS09} and \cite{GhoshRS12} design utility maximizing perturbation mechanisms for the case of counting queries. In \cite{GupteS10, LiHRMM10}, authors propose different approaches to designing perturbation mechanisms for counting queries under differential privacy. However, \cite{BrennerN10} presents some impossibility results of extending these approaches to other types of database queries. Under some assumptions about the utility metric, \cite{geng2012optimal} shows that the optimal perturbation probability distribution has a symmetric staircase-shaped probability density function. \cite{BartheKOZ12, ChatzikokolakisABP13, ReedP10} extend differential privacy metric using generic distance functions on the set of secrets. Some extensions of differential privacy also consider the problem of incorporating the prior knowledge into its privacy definition \cite{kifer2011no, he2014blowfish}. 

The most related paper to our framework, in this domain, is \cite{BordenabeCP14} where the authors construct utility-maximizing differentially private obfuscation mechanisms using linear programming. The authors prove an interesting relation between utility-maximizing differential privacy and distortion-privacy-maximizing mechanisms that bound utility, when distance functions used in utility and privacy metrics are the same. This, however, cannot guarantee distortion privacy for general metrics. The optimal differentially private mechanisms, in general, do not incorporate the available knowledge about the secret while achieving differential privacy.

Distortion privacy, which evaluates privacy as the inference error \cite{ShokriTLH11}, is a follow-up of information-theoretic metrics for anonymity and information leakage \cite{chatzikokolakis2008anonymity, diaz2003towards, KopfB07, serjantov2003towards}. This class of metrics is concerned with what can be inferred about the true secret of the user by combining the observation (of obfuscated information) and prior knowledge. The problem of maximizing privacy under utility constraint, assuming a prior, is proven to be equivalent to the user's best strategy in a zero-sum game against adaptive adversaries \cite{ShokriTTHL12}. With this approach, one can find the optimal strategies using linear programming. In fact, linear programming is the most efficient solution for this problem \cite{ConitzerT06}. However, if we want to guarantee a certain level of privacy for the user and maximize her utility, the problem cannot be modeled as a zero-sum game anymore and there has been no solution for it so far. We formalize this game, and construct a linear programming solution for these privacy games too.

Regarding the utility metric, we consider the expected distance between the observation and the secret as the utility metric \cite{BrennerN10, ChatzikokolakisABP13, GhoshRS09, ShokriTTHL12}. The distance function can depend on the user and also the application.

In the case of applying obfuscation over time, we need to update the user's estimation of the prior leakage according to what has been shared by the user~\cite{theodorakopoulos2014prolonging, danezis2013you}. We might also need to update the differential privacy budget over time \cite{chatzikokolakis2014predictive}. In this paper, we model one time sharing of a secret, assuming that the prior leakage and the differential privacy budget are properly computed and adjusted based on the previous observations. 

Our problem is also related to the problem of adversarial machine learning \cite{BarrenoNSJT06, HuangJNRT11} and the design of security mechanisms, such as intelligent spam detection algorithms \cite{LiuC09, BrucknerS11, KorzhykYKCT11}, against adaptive attackers. It is also similar to the problem of placing security patrols in an area to minimize the threat of attackers \cite{ParuchuriPMTOK08}, and faking location-based queries to protect against localization attack \cite{ShokriTTHL12}. The survey \cite{ManshaeiZABH11} explores more examples of the relation between security and game theory.

\section{Definitions}\label{sec:definitions}

In this section, we define different parts of our model. We assume a user shares her data through an information sharing system in order to obtain some service (utility). We also assume that users want to protect their sensitive information, while they share their data with untrusted entities. For example, in the case of sharing location-tagged data with a service provider, a user might want to hide the exact visited locations, their semantics, or her activities that can be inferred from the visited locations. We refer to the user's sensitive information as her {\em secret}. To protect her privacy, we assume that user obfuscates her data before sharing or publishing it. Figure~\ref{fig:framework} illustrates the information flow that we assume in this paper.

The input to the protection mechanism is a secret $s \in S$, where $S$ is the set of all possible values that $s$ can take (for example, the locations that the user can visit, or the individuals that she is acquainted with). Let prior leakage $\pi$ be the probability distribution over values of $s$ to reflect the data model and the a priori exposed information about the secret.
\begin{align}\label{eq:prior:pdf}
    \pi(s) = \pr{\rv{S} = s}
\end{align}

The probability distribution $\pi$ is estimated by the suer to be the predictability of the user's secret given her exposed information in the past. Thus, anytime that user shares some (obfuscated) information, she needs to update this probability distribution \cite{theodorakopoulos2014prolonging, danezis2013you}. This is how we incorporate the correlation between users' data shared over time.

\subsection{Obfuscation Mechanism}

We assume that a user wants to preserve her privacy with respect to $s$.  To protect her privacy, a user obfuscates her secret $s$ and shares an inaccurate version of it through the system.  We assume that this obfuscated data $o \in O$ is observable through the system. We consider a generic class of obfuscation mechanisms, in which the observable $o$ is sampled according to the following probability distribution.
\begin{align}\label{eq:obfuscation:pdf}
    p(o|s) = \pr{\rv{O} = o | \rv{S} = s}
\end{align}

Thus, we model the privacy preserving mechanism as a noisy channel between the user and the untrusted observer. This is similar to the model used in quantitative information flow and quantitative side-channel analysis \cite{KopfB07, AlvimCPS12}. The output, i.e., the set of observables $O$, can in general be a member of the powerset of $S$. As an example, in the most basic case, $O = S$, i.e., the protection mechanism can only perturb the secret by replacing it with another possible secret's value. This can happen through adding noise to $s$. In a more generic case, the members of $O$ can contain a subset of secrets. For example, the protection mechanism can generalize a location coordinate, by reducing its granularity.

\begin{figure}[t]
\begin{tikzpicture} [auto=left,scale=0.45]
    \node [rectangle,draw,thick,label=above:{\em\footnotesize prior leakage}] (p) at (1, 1) {$\pi(s)$};
    \node [circle,draw,thick,label=below:{\em\footnotesize secret}] (ar) at (4, 1) {$s$};
    \node [rectangle,draw,thick,blue,label=above:{\em\footnotesize obfuscation}] (f) at (7, 1) {$p(o | s)$};
    \node [circle,draw,thick,label=below:{\em\footnotesize observable}] (or) at (10, 1) {$o$};
    \node [rectangle,draw,thick,blue,label=above:{\em\footnotesize inference}] (h) at (13, 1) {$q(\hat{s} | o)$};
    \node [circle,draw,thick,label=below:{\em\footnotesize estimate}] (er) at (16, 1) {$\hat{s}$};
    \node [label=below:{\em\footnotesize utility cost}] (dq) at (7, -2) {$c(o, s)$};
    \node [label=above:{\em\footnotesize user-specific privacy}] (dp) at (13, 5)  {$d(\hat{s}, s)$};

    \path [draw,->] (p)  -- (ar);
    \path [draw,->] (ar) -- (f);
    \path [draw,->] (f)  -- (or);
    \path [draw,->] (or) -- (h);
    \path [draw,->] (h)  -- (er);

    \path [red] (dp) edge [<-,bend right] node {} (ar)
                     edge [<-,bend left]  node {} (er)
                (dq) edge [<-,bend left]  node {} (ar)
                     edge [<-,bend right] node {} (or);
\end{tikzpicture}
\caption{The Information Sharing Framework. Probability distribution $\pi$ encodes the user's estimation of a priori leaked information about secret $s$. The secret is obfuscated by the protection mechanism $p$ whose output is an observable $o$. The adaptive adversary (anticipated by the user) runs inference attack $q$ on $o$ and draws a probability distribution over estimates $\hat{s}$. Distance function $c$ denotes the utility cost of the protection mechanism due to obfuscation. Distance function $d$ denotes the privacy of user (for distortion privacy metric) or the required indistinguishability between secrets (for differential privacy metric). User defines the distance function $d$ to reflect her privacy sensitivities.} \label{fig:framework}
\end{figure}
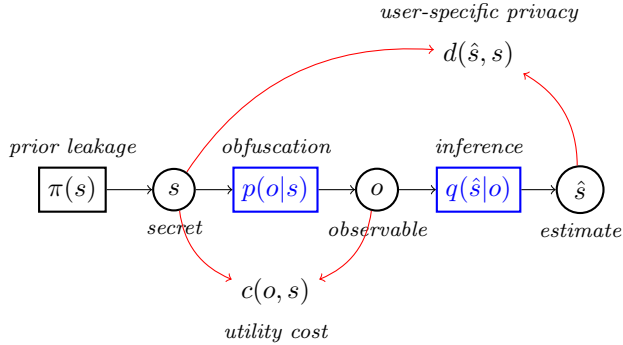

\subsection{Utility Cost}

Users incur a utility loss due to obfuscation. Let the distance function $c(o, s)$ determine the utility cost (information usefulness degradation) due to replacing a secret $s$ with an observable $o$. The cost function is dependent on the application of the shared information, on the specific service that is provided to the user, and also on the user's expectations. We compute the expected utility cost of a protection mechanism $p$ as
\begin{align}\label{eq:utility:avgmetric}
    \sum_s \pi(s) \sum_o p(o|s) \cdot c(o, s).
\end{align}

We can also compute the worst (maximum) utility cost over all possible secrets as
\begin{align}\label{eq:utility:maxmetric}
    \maximize{s} \sum_o p(o|s) \cdot c(o, s).
\end{align}

In this work, we do not plan to determine which metrics are the best representative utility loss metrics for different types of services or users. We only assume that the designer of optimal obfuscation mechanism is provided with such a utility function, for example, by constructing it according to the application \cite{micinski2013empirical}, or by learning it automatically \cite{bilogrevic2015predicting} from the users' preferences and application profile.

\subsection{Inference Attack}

We stated that the user wants to protect her privacy with respect to secret $s$ against untrusted observers. To be consistent with this, we define the adversary as an entity who aims at finding the user's secret by observing the outcome of the protection mechanism and minimizing the user's privacy with respect to her privacy sensitivities. For any observation $o$, then we determine the probability distribution over the possible secrets $\hat{s} \in S$ as to be the true secret of the user.
\begin{align}\label{eq:attack:pdf}
    q(\hat{s}|o) = \pr{\rv{S} = \hat{s} | \rv{O} = o}
\end{align}

The goal of the inference algorithm $q$ is to invert a given protection mechanism $p$ to estimate $\hat{s}$. The error of adversary, in this estimation process, determines the effectiveness of the inference algorithm, which is captured by the distortion privacy metric. 

\subsection{Distortion Privacy Metric}

As stated above, the user's privacy and the adversary's inference error are two sides of the same coin. We define the privacy gain of the user with secret $s$ as a distance between the two data points: $d(\hat{s}, {s})$, where $\hat{s}$ is the a posteriori estimation of the secret \cite{ShokriTLH11}. The distance function $d$ is determined by the sensitivity of the user towards each secret $s$ when estimated as $\hat{s}$. A user would be less worried about revealing $o \sim p(o|s)$, if the portrait of her secret $s$ in the eyes of adversary is an estimate $\hat{s}$ with a large distance $d(\hat{s}, s)$.

This distance function is defined by the user. It could be a semantic distance between different values of secrets to reflect the privacy risk of $\hat{s}$ on user when her secret is $s$. Usually, the highest risk is associated with the case where the estimate $\hat{s}$ is equal to the secret $s$. However, sometimes even wrong estimates can impose a high risk on the user, for example when they leak information about the semantic of the secret.

We compute the user privacy obtained through a protection mechanism $p$, with respect to a given inference algorithm $q$, for a specific secret $s$ as
\begin{align}\label{eq:privacy:bayesian:persecret}
    \sum_o p(o|s) \sum_{\hat{s}} q(\hat{s}|o) \cdot d(\hat{s}, s).
\end{align}

By averaging this value over all possible secrets, we compute the expected distortion privacy of the user as
\begin{align}\label{eq:privacy:bayesian:average}
    \sum_s \pi(s) \sum_o p(o|s) \sum_{\hat{s}} q(\hat{s}|o) \cdot d(\hat{s}, s).
\end{align}

This metric shows the average estimation error, or how distorted the reconstructed user's secret is. Thus, we refer to it as the {\em distortion} privacy metric. 

What associates a semantic meaning to this metric is the distance function $d$. Many distance functions can be defined to reflect distortion privacy. This depends on the type of the secret and to the sensitivity of the user. For example, if the user's secret is her age, function $d$ could be the absolute distance between two numbers. If the secret is the user's location, function $d$ could be a Euclidean distance between locations, or their semantic dissimilarity. If the secret is the movies that she has watched, function $d$ could be the Jaccard distance between two sets of movies. 

\subsection{Differential Privacy Metric}

The privacy that is achieved by an obfuscation mechanism can be computed with respect to the information leakage through the mechanism, regardless of the secret's inference. For example, the differential privacy metric, originally proposed for protecting privacy in statistical databases \cite{Dwork06}, is sensitive only to the difference between the probabilities of obfuscating multiple secrets to the same observation (which is input to the attack).

According to the original definition of differential privacy, a randomized function $\mathcal{K}$ (that acts as the privacy protection mechanism) provides $\epsilon$-differential privacy if for all data sets $D$ and $D'$, that differ on at most one element, and all $Y \subseteq Range(\mathcal{K})$, the following inequality holds.
\begin{align}\label{eq:privacy:diff:originalnotation}
    \pr{\mathcal{K}(D) \in Y} \leq \mathrm{exp}(\epsilon) \cdot \pr{\mathcal{K}(D') \in Y}
\end{align}

Differential privacy is not limited to statistical databases. It has been used in many different contexts where various types of adjacency relations capture the context dependent privacy. A typical example is edge privacy in graphs \cite{nissim2007smooth}. It has also been proposed for arbitrary distance function between secrets \cite{ChatzikokolakisABP13}.

This notion can simply be used for measuring information leakage \cite{AlvimACP11}. It has been shown that differential privacy imposes a bound on information leakage \cite{alvim2011relation, alvim2012differential}. And, this is exactly why we are interested in this metric. Let $d^\epsilon(s, s')$ be a distinguishability metric between $s, s' \in S$. A protection mechanism is defined to be differentially private if for all secrets $s, s' \in S$, where $d^\epsilon(s, s') \leq d^\epsilon_m$, and all observables $o \in O$, the following inequality holds.
\begin{align}\label{eq:privacy:diff:original}
    p(o | s) \leq \mathrm{exp}(\epsilon) \cdot p(o | s')
\end{align}

In this paper, we use a generic definition of differential privacy, assuming arbitrary distance function $d^\epsilon()$ on the secrets \cite{BartheKOZ12, ChatzikokolakisABP13, DworkMNS06, ReedP10}. In this form, a protection mechanism is differentially private if for all secrets $s, s' \in S$, with distinguishability $d^\epsilon(s, s')$, and for all observables $o \in O$, the following holds.
\begin{align}\label{eq:privacy:diff:gneric:mult}
    p(o | s) \leq \mathrm{exp}(\epsilon \cdot d^\epsilon(s, s')) \cdot p(o | s')
\end{align}

In fact, the differential privacy metric guarantees that, given the observation, there is not enough convincing evidence to prefer one secret to other similar ones (given $d^\epsilon$). In other words, it makes multiple secret values indistinguishable from each other. 

\section{Problem Statement}\label{sec:problem}

The problem that we address in this paper is to find an optimal balance between privacy and utility, and to construct the protection mechanisms that achieve such optimal points. More precisely, we want to construct utility-maximizing obfuscation mechanisms with joint differential-distortion privacy guarantees. 

The problem is to find a probability distribution function $p^*$ such that it minimizes utility cost of the user, on average,
\begin{align}\label{eq:objective:utility:avgmetric}
    p^* = \argmin{p} \sum_s \pi(s) \sum_o p(o|s) \cdot c(o, s)
\end{align}
or, alternatively, over all the secrets
\begin{align}\label{eq:objective:utility:maxmetric}
    p^* = \argmin{p} \,\,\, \maximize{s} \sum_o p(o|s) \cdot c(o, s)
\end{align}
under the user's privacy constraints.

\subsection{Distortion Privacy Constraint}

Let $d_m$ be the minimum desired distortion privacy level. The user's average distortion privacy is guaranteed if the obfuscation mechanism $p^*$ satisfies the following inequality.
\begin{align}\label{eq:constraint:privacy:bayesian}
    \sum_s \pi(s) \sum_o p^*(o|s) \sum_{\hat{s}} q^*(\hat{s}|o) \cdot d(\hat{s}, s) \ge d_m
\end{align}
where $q^*$ is the optimal inference attack against $p^*$.

\subsection{Differential Privacy Constraint}

Let $\epsilon_m$ be the differential privacy budget associated with the minimum desired privacy of the user, and $d^\epsilon_m$ be the distinguishability threshold. The user's privacy is guaranteed if $p^*$ satisfies the following inequality.
\begin{align}\label{eq:constraint:privacy:diff:mult2}
    p^*(o | s) \leq \mathrm{exp}(\epsilon_m) \cdot p^*(o | s'), \, \forall o, s, s': d^\epsilon(s, s') \leq d^\epsilon_m
\end{align}

Or, alternatively (following \cite{ChatzikokolakisABP13}'s definition of differential privacy):
\begin{align}\label{eq:constraint:privacy:diff:mult}
    p^*(o | s) \leq \mathrm{exp}(\epsilon_m \cdot d^\epsilon(s, s')) \cdot p^*(o | s'), \, \forall s, s', o
\end{align}

In this paper, we mainly use the latter definition, but make use of the former one as the basis to reduce the computation cost of optimizing differential privacy (see Appendix~\ref{sec:approx}).

\section{Solution: Privacy Games}\label{sec:game}

The flow of information starts from the user where the secret is generated. The user then selects a protection mechanism, and obfuscates her secret according to its probabilistic function. After the adversary observes the output, he can design an optimal inference attack against the obfuscation mechanism to invert it and estimate the secret. We assume the obfuscation mechanism is not oblivious and is known to the adversary. This gives the adversary the upper hand against the user in their conflict. So, designing an obfuscation mechanism against a fixed attack is always suboptimal.

The best obfuscation mechanism is the one that {\em anticipates} the adversary's attack. Thus, the obfuscation mechanism should be primarily designed against an {\em adaptive} attack which is tailored to each specific obfuscation mechanism. So, by assuming that the adversary designs the best inference attack against each protection mechanism, the user's goal (as the defender) must be to design the obfuscation mechanism that maximizes her (privacy or utility) objective against an adversary that optimizes the conflicting objective of guessing the user's secret. The adversary is an entity assumed by the user as the entity whose objective s exactly the opposite of the user's. So, we do not model any particular attacker but the one that minimizes user's privacy according to distance functions $d$ and $d^{\epsilon}$.

For each obfuscation mechanism there is an inference attack that optimizes the adversary's objective and leads to a certain privacy and utility payoff for the user. The optimal obfuscation mechanism for the user is the one that brings the maximum payoff for her, against the mechanism's corresponding optimal inference attack. 

Enumerating all pairs of user-attacker mechanisms to find the optimal obfuscation function is infeasible. We model the joint user-adversary optimization problem as a leader-follower (Stackelberg) game between the user and the adversary. The user leads the game by choosing the protection mechanism $p$, and the adversary follows by designing the inference attack $q$. The solution to this game is the pair of user-adversary best response strategies $p^*$ and $q^*$ which are mutually optimal against each other. If the user implements $p^*$, we have already considered the strongest attack $q^*$ against it. Thus, $p^*$ is robust against {\em any} algorithm used as inference attack.

For any secret $s \in S$, the strategy space of the user is the set of observables $O$. For any observable $o \in O$, the strategy space of the adversary is the set of secrets $S$ (all possible adversary's estimates $\hat{s} \in S$). For a given secret $s \in S$, we represent a mixed strategy for the user by a vector $p(.|s) = \left( p(o_1|s), p(o_2|s), \cdots, p(o_m|s) \right)$, where $\{o_1, o_2, \cdots, o_m\} = O$. Similarly, a mixed strategy for the adversary, for a given observable $o \in O$ is a vector $q(.|o) = \left( q(\hat{s}_1|o), q(\hat{s}_2|o), \cdots, q(\hat{s}_n|o) \right)$, where $\{\hat{s}_1, \hat{s}_2, \cdots, \hat{s}_n\} = S$. Note that the vectors $p(.|s)$ and $q(.|o)$ are respectively the conditional distribution functions associated with an obfuscated function for a secret $s$ and an inference algorithm for an observable $o$. Let $P$ and $Q$ be the sets of all mixed strategies of the user and the adversary, respectively.
\begin{align}
    P = \{ & p(.|s) = ( p(o_1|s), p(o_2|s), \cdots, p(o_m|s) ), \forall s \in S \,: \nonumber\\
    & p(o_i|s) \ge 0, \forall o_i \in O, \sum_i p(o_i|s) = 1
    \} \label{eq:game:mixedstrategies:p} \\
    Q = \{ & q(.|o) = ( q(\hat{s}_1|o), q(\hat{s}_2|o), \cdots, q(\hat{s}_n|o) ), \forall o \in O \,: \nonumber\\
    & q(\hat{s}_j|o) \ge 0, \forall \hat{s}_j \in S, \sum_j q(\hat{s}_j|o) = 1
    \} \label{eq:game:mixedstrategies:q}
\end{align}

A member vector of sets $P$ or $Q$ with a $1$ for the $k$th component and zeros elsewhere is the pure strategy of choosing action $k$. For example, an obfuscation function $p(.|s)$ for which $p(o_i|s) = 0, \forall i \neq k$ and $p(o_k|s) = 1$ is the pure strategy of exclusively and deterministically outputting observable $o_k$ for secret $s$. Thus, the set of pure strategies of a player is a subset of mixed strategies of the player.

In the case of the distortion privacy metric, the game needs to be formulated as a {\em Bayesian Stackelberg game}. In this game, we assume the probability distribution $\pi$ on the secrets and we find $p^* \in P$ and $q^* \in Q$ that create the equilibrium point. If user deviates from this strategy and chooses $p' \ne p^*$, there would be an inference attack $q'^*$ against it such that $(p', q'^*)$ leads to a lower privacy for the user, i.e., $p^*$ is optimal.

In the case of a differential privacy metric, as the metric is not dependent to the adversary's inference attack, the dependency loop between finding optimal $p^*$ and $q^*$ is broken. Nevertheless, it is still the user who plays first by choosing the optimal protection mechanism. In the following sections, we solve these games and provide solutions on how to design the optimal user-adversary strategies.

\section{Stackelberg Privacy Games}\label{sec:solution:bayesian}

Assume that the nature draws secret $s$ according to the probability distribution $\pi(s)$. Given $s$, the user draws $o$ according to her obfuscation mechanism $p(o|s)$, and makes it observable to the adversary. Given observation $o$, the adversary draws $\hat{s}$ according to his inference attack $q(\hat{s}|o)$. We assume that $\pi(s)$ is known to both players.  We want to find the mutually optimal $\langle p^*, q^* \rangle$: The solution of the Bayesian Stackelberg privacy game.

To this end, we first design the optimal inference attack against any given protection mechanism $p$.  This will be the {\em best response} of the adversary to the user's strategy. Then, we design the optimal protection mechanism for the user according to her objective and constraints, as stated in Section~\ref{sec:problem}.  This will be the user's best utility-maximizing strategy that anticipates the adversary's best response. 

\subsection{Optimal Inference Attack}

The adversary's objective is to minimize (the user's privacy and thus) the inference error in estimating the user's secret. Given a secret $s$, the distance function $d(\hat{s}, s)$ determines the error of an adversary in estimating the secret as $\hat{s}$. In fact, this distance is exactly what a user wants to maximize (or put a lower bound on) according to the distortion privacy metric. We compute the expected error of the adversary as
\begin{align}\label{eq:expectederror}
    & \sum_s \pi(s) \sum_{\hat{s}} \pr{\hat{s}|s} \cdot d(\hat{s}, s) = \nonumber\\
    & = \sum_{s, o, \hat{s}} \pi(s) \cdot p(o|s) \cdot q(\hat{s}|o) \cdot d(\hat{s}, s)
\end{align}

Therefore, we design the following linear program, through which we can compute the adversary's inference strategy that, given the probability distribution $\pi$ and obfuscation $p$, minimizes his expected error with respect to a distance function $d$.
\begin{subequations}\label{eq:lp:adversary:bayesian}
\begin{align}
    q^* =
    \argmin{q} & \sum_{s, o, \hat{s}} \pi(s) \cdot p(o|s) \cdot q(\hat{s}|o) \cdot d(\hat{s}, s) \label{eq:lp:adversary:bayesian:obj} 
\end{align}
\end{subequations}
under the constraint that the solution is a proper conditional probability distribution function. 

In the next subsection, we will show that the optimal deterministic inference (that associates one single estimate with probability one to each observation) results in the same privacy for the user \eqref{eq:proof:q}. Alternative ways to formulate this problem is given in Appendix~\ref{sec:optimalattack}.

\subsection{Optimal Protection Mechanism}

In this case, we assume the user would like to minimize her utility cost \eqref{eq:objective:utility:avgmetric} under a (lower bound) constraint on her privacy \eqref{eq:constraint:privacy:bayesian}. Therefore, we can formulate the problem as 
\begin{subequations}\label{eq:lp:user:utility-privacy:nested}
\begin{align}
    p^* =
    \argmin{p} & \sum_{s, o} \pi(s) \cdot p(o|s) \cdot c(o, s) \\
    \st & \sum_{s, o, \hat{s}} \pi(s) \cdot p(o|s) \cdot q^*(\hat{s}|o) \cdot d(\hat{s}, s) \geq d_m \label{eq:lp:user:utility-privacy:nested:c1}
\end{align}
\end{subequations}

However, solving this optimization problem requires us to know the optimal $q^*$ against $p^*$, for which we need to know $p^*$ as formulated in \eqref{eq:lp:adversary:bayesian}. So, we have two linear programs (one for the user and one for the adversary) to solve. But, the solution of each one is required in solving the other. This optimization dependency loop reflects the game-theoretic concept of {\em mutual best response} of the two players. This game is a {\em nonzero-sum Stackelberg game} as the user (leader player) and adversary (follower player) have different optimization objectives (one maximizes utility, and the other minimizes privacy). We break the dependency loop between the optimization problems using the game-theoretic modeling, and we prove that the user's best strategy can be constructed using linear programming.
\begin{theorem}
Given a probability distribution $\pi$, the distance functions $d$ and $c$, and the threshold $d_m$, the solution to the following linear program is the optimal protection strategy $p^*$ for the user, which is the solution to \eqref{eq:lp:user:utility-privacy:nested} with respect to adversary's best response \eqref{eq:lp:adversary:bayesian}.
\begin{subequations}\label{eq:lp:user:utility-privacy:game}
\begin{align}
    p^* =
    \argmin{p} & \sum_{s, o} \pi(s) \cdot p(o|s) \cdot c(o, s) \\
    \st
    & \sum_s \pi(s) \cdot p(o|s) \cdot d(\hat{s}, s) \ge x(o), \forall o, \hat{s} \label{eq:lp:user:utility-privacy:game:c1}\\
    & \sum_o x(o) \ge d_m \label{eq:lp:user:utility-privacy:game:c2} 
\end{align}
\end{subequations}
\end{theorem}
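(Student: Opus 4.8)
The plan is to show that the nested bilevel problem \eqref{eq:lp:user:utility-privacy:nested} together with the adversary's best response \eqref{eq:lp:adversary:bayesian} has the same optimal value and the same optimal $p^*$ as the single linear program \eqref{eq:lp:user:utility-privacy:game}. The objective functions are identical, so everything reduces to showing that the feasible sets, projected onto the $p$ coordinates, coincide. I would argue the two inclusions separately.

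First I would characterize the adversary's best response. For a fixed $p$, the inner program \eqref{eq:lp:adversary:bayesian} decomposes over observables: for each $o$, the adversary picks $q(\cdot\mid o)$ to minimize $\sum_{\hat s} q(\hat s\mid o)\bigl(\sum_s \pi(s)\,p(o\mid s)\,d(\hat s,s)\bigr)$ over the simplex. Hence an optimal response is deterministic — put all mass on an $\hat s$ achieving $\min_{\hat s}\sum_s \pi(s)\,p(o\mid s)\,d(\hat s,s)$ — and the resulting privacy contribution of observable $o$ is exactly $\min_{\hat s}\sum_s \pi(s)\,p(o\mid s)\,d(\hat s,s)$. (This is the fact the paper flags around \eqref{eq:proof:q}, which I may assume.) Therefore the distortion constraint \eqref{eq:lp:user:utility-privacy:nested:c1} against $q^*$ is equivalent to
\begin{align}
    \sum_o \min_{\hat s}\;\sum_s \pi(s)\,p(o\mid s)\,d(\hat s,s) \;\ge\; d_m. \nonumber
\end{align}

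Next I would show this is equivalent to the existence of auxiliary variables $x(o)$ satisfying \eqref{eq:lp:user:utility-privacy:game:c1}–\eqref{eq:lp:user:utility-privacy:game:c2}. For the forward direction, given a feasible $p$ for the nested problem, set $x(o) := \min_{\hat s}\sum_s \pi(s)\,p(o\mid s)\,d(\hat s,s)$; then \eqref{eq:lp:user:utility-privacy:game:c1} holds by definition of the minimum and \eqref{eq:lp:user:utility-privacy:game:c2} holds by the displayed inequality. Conversely, given $(p,x)$ feasible for \eqref{eq:lp:user:utility-privacy:game}, constraint \eqref{eq:lp:user:utility-privacy:game:c1} forces $x(o)\le \min_{\hat s}\sum_s \pi(s)\,p(o\mid s)\,d(\hat s,s)$ for every $o$, so summing and using \eqref{eq:lp:user:utility-privacy:game:c2} gives $\sum_o \min_{\hat s}\sum_s \pi(s)\,p(o\mid s)\,d(\hat s,s)\ge \sum_o x(o)\ge d_m$, which is exactly the nested-problem constraint with the adversary's best response plugged in. Since the objective depends only on $p$ and is the same in both programs, the $p$-components of the two optimal solution sets coincide, proving the theorem.

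The main subtlety — the step I would be most careful about — is the direction that turns a $\min$ over $\hat s$ into a family of linear inequalities indexed by $\hat s$. This works precisely because we want a \emph{lower} bound on privacy: requiring $x(o)$ to be below every $\sum_s \pi(s)\,p(o\mid s)\,d(\hat s,s)$ is the epigraph-style linearization of $x(o)\le\min_{\hat s}(\cdot)$, and the optimizer will push $\sum_o x(o)$ up against $d_m$, so at optimum $x(o)$ can be taken equal to the min for the relevant observables. I would also note two small points for rigor: the inner program \eqref{eq:lp:adversary:bayesian} always attains its optimum (feasible region is a product of simplices, objective linear), so $q^*$ is well-defined; and introducing the $x(o)$ does not enlarge the projection onto $p$ in a way that changes the optimum, since any feasible $p$ admits the canonical choice of $x$ above. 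No convexity or duality beyond finite-dimensional LP feasibility is needed.
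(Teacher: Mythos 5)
Your proposal is correct and follows essentially the same route as the paper's own proof: you characterize the adversary's best response per observable as a deterministic strategy achieving $\min_{\hat s}\sum_s \pi(s)\,p(o\mid s)\,d(\hat s,s)$ (the paper's inequalities \eqref{eq:proof:ineq1}--\eqref{eq:proof:ineq2} and the pure strategy \eqref{eq:proof:q}), and then linearize the resulting minimum via the auxiliary variables $x(o)$ exactly as in constraints \eqref{eq:lp:user:utility-privacy:game:c1}--\eqref{eq:lp:user:utility-privacy:game:c2}. Your explicit two-direction argument for the equivalence of the feasible sets is, if anything, slightly more carefully stated than the paper's chain of (in)equalities, but it is the same proof.
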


\begin{proof}
See Appendix~\ref{sec:proof}.
\end{proof}

\section{Optimal Differential Privacy}\label{sec:solution:differential}

In this section, we design optimal differentially private protection mechanisms. We solve the optimization problems for maximizing utility under privacy constraint. 

We design the following linear program to find the user strategy $p^*$ that guarantees user differential privacy \eqref{eq:constraint:privacy:diff:mult}, for a maximum privacy budget $\epsilon_m$, and minimizes the utility cost \eqref{eq:objective:utility:avgmetric} of the obfuscation mechanism.
\begin{subequations}\label{eq:lp:user:utility-privacy:diff:mult}
\begin{align}
    \minimize{p} & \sum_{s,o} \pi(s) \cdot p(o|s) \cdot c(o,s) \label{eq:lp:user:utility-privacy:diff:mult:obj} \\
    \st & \frac{p(o|s)}{p(o|s')} \leq \cdot \mathrm{exp}(\epsilon_m \cdot d^\epsilon(s, s')) \ , \forall s, s', o 
\end{align}
\end{subequations}

Or, alternatively, for a distinguishability bound $d^\epsilon_m$, we can solve the following.
\begin{subequations}\label{eq:lp:user:utility-privacy:diff:mult2}
\begin{align}
    \minimize{p} & \sum_{s,o} \pi(s) \cdot p(o|s) \cdot c(o,s) \label{eq:lp:user:utility-privacy:diff:mult:obj} \\
    \st & \frac{p(o|s)}{p(o|s')} \leq \mathrm{exp}(\epsilon_m) \ , \forall o, s, s': d^\epsilon(s, s') \leq d^\epsilon_m 
\end{align}
\end{subequations}

\section{Optimal Joint Differential and Distortion Privacy Mechanism} \label{sec:solution:joint}

Obfuscation mechanisms designed based on distortion and differential privacy protect the user's privacy from two different angles. In general, for arbitrary $d$ and $d^{\epsilon}$, there is no guarantee that a mechanism with a bound on one metric holds a bound on the other.

Distortion privacy metric reflects the {\em absolute} privacy of the user, based on the posterior estimation on the obfuscated information. Differential privacy metric reflects the {\em relative} information leakage of each observation about the secret. However, it is not a measure on the extent to which the observer, who already has some knowledge about the secret from the previously shared data, can guess the secret correctly. So, the inference might be very accurate (because of the background knowledge) despite the fact that the obfuscation in place is a differentially-private mechanism.

As distortion and differential metrics guarantee different dimensions of the user's privacy requirements, we respect both in a protection mechanism. This assures that not only the information leakage is limited, but also the absolute privacy level is at the minimum required level. Thanks to our unified formulation of privacy optimization problems as linear programs, the problem of jointly optimizing and guaranteeing privacy with both metrics can also be formulated as a linear program.

The solution to the following linear program is a protection mechanism $p^*$ that maximizes the user's utility and guarantees a minimum distortion privacy $d_m$ and a minimum differential privacy $\epsilon_m$, given probability distribution $\pi$ and distance functions $c$ and $d$ and distinguishability metric $d^\epsilon$. The value of the optimal solution is the utility cost of the optimal mechanism.
\begin{subequations}\label{eq:lp:user:utility-jointprivacy}
\begin{align}
    \minimize{p} & \sum_{s, o} \pi(s) \cdot p(o|s) \cdot c(o, s) \\
    \st
    & \sum_s \pi(s) \cdot p(o|s) \cdot d(\hat{s}, s) \ge x(o), \forall o, \hat{s} \\
    & \sum_o x(o) \ge d_m \\
    & \frac{p(o|s)}{p(o|s')} \leq \cdot \mathrm{exp}(\epsilon_m \cdot d^\epsilon(s, s')) \ , \forall s, s', o 
\end{align}
\end{subequations}

\section{Analysis}\label{sec:analysis}

We have implemented all our linear program solutions in a software tool that can be used to process data for different applications, in different settings. In this section, we use our tool to design privacy protection mechanisms, and also to make a comparison between different optimal mechanisms, i.e., distortion, differential, and joint distortion-differential privacy preserving mechanisms. We study the properties of these mechanisms and we show how robust they are with respect to inference attack algorithms as well as to the adversary's knowledge on secrets. We also investigate their utility cost for protecting privacy. Furthermore, we show that the optimal joint distortion-differential mechanisms are more robust than the two mechanisms separately.  In Appendix~\ref{sec:approx}, we discuss and evaluate approximations of the optimal solution for large number of constraints.

We run experiments on location data, as today they are included in most of data sharing applications. We use a real data-set of location traces collected through the Nokia Lausanne Data Collection Campaign \cite{kiukkonen2010towards}. The location information belong to a $15 \times 8$km area. We split the area into $20 \times 15$ cells. We consider location of a mobile user in a cell as her secret. Hence, the set of secrets is equivalent to the set of location cells. We assume the set of observables to be the set of cells, so the users obfuscate their location by perturbation (i.e., replacing their true location with any location in the map). We run our experiments on $10$ randomly selected users, to see the difference in the results due to difference in user's location distribution $\pi$ based on users' different location access profiles. We build $\pi$ for each user separately given their individual location traces, using maximum likelihood estimation (normalizing the user's number of visits to each cell in the tarce).

We assume a Euclidean distance function for $d$ and $d^\epsilon$. This reflects the sensitivity of user towards her location. By using this distance function for distortion privacy, we guarantee that the adversary cannot guess the user's true location with error lower than the required privacy threshold ($d_m$). Choosing Euclidean distance function as the metric for distinguishability ensures that the indistinguishability between locations is larger for locations that are located closer to each other. 

We assume a Hamming distortion function for $c$ (i.e., the utility cost is $0$ only if the user's location and the observed location are the same, otherwise the cost is $1$). The utility metric can vary depending on the location-based sharing application and also the purpose for which the user shares her location \cite{bilogrevic2015predicting}. Choosing the Hamming function reflects the utility requirement of users who want to inform others about their current location in location check-in applications.

\begin{figure*}[t!]
   \centering
   \subfigure[Achieved distortion privacy for an optimal differential privacy mechanism with $\epsilon_m$. Each line corresponds to one user.]{
      \includegraphics[width=0.98\columnwidth]{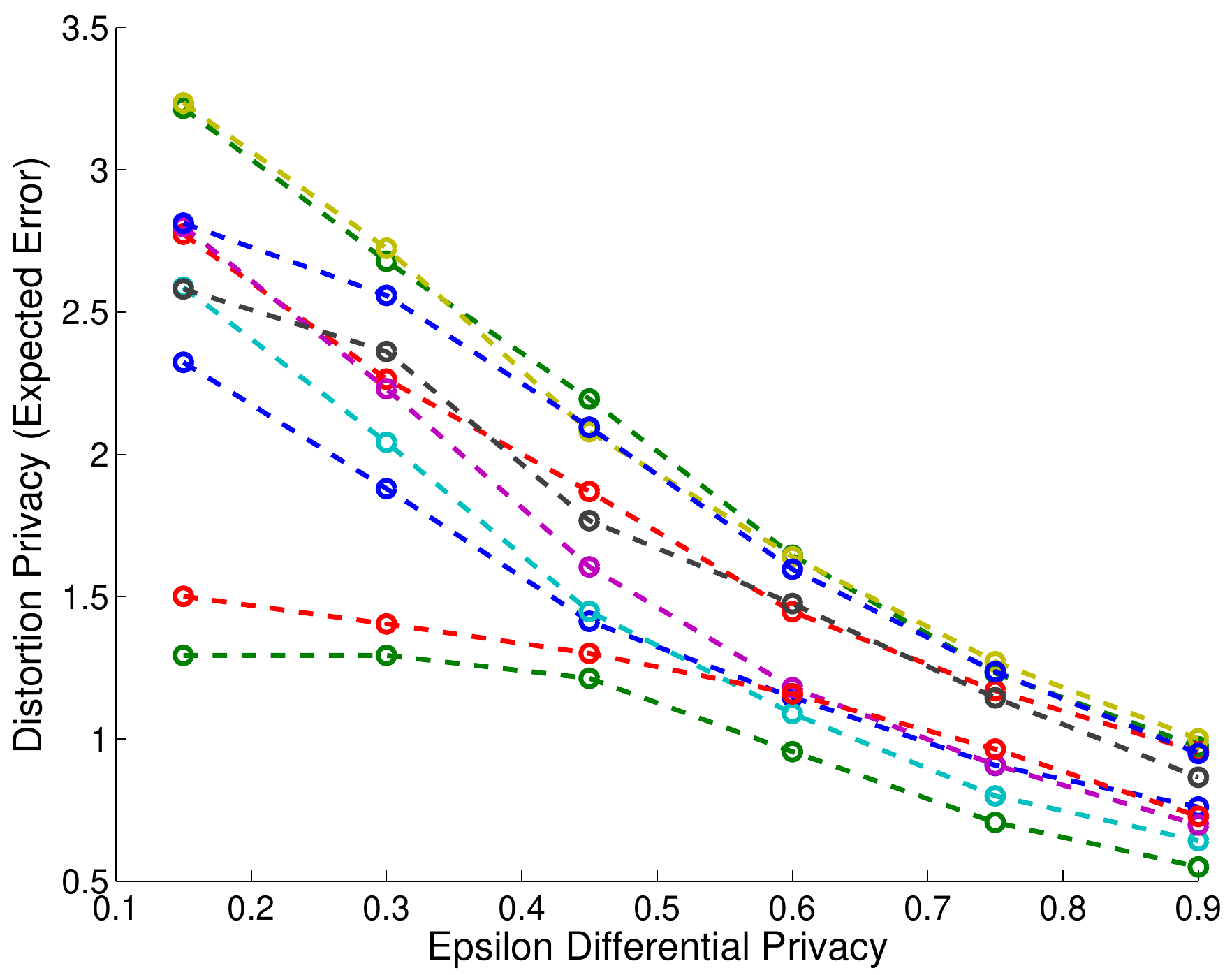}
      \label{fig:privacy_loop_epsilon_vs_privacy_optimal}
   }~~
   \subfigure[Utility cost metric versus distortion privacy metric, for three different optimal obfuscation mechanisms]{
      \includegraphics[width=0.98\columnwidth]{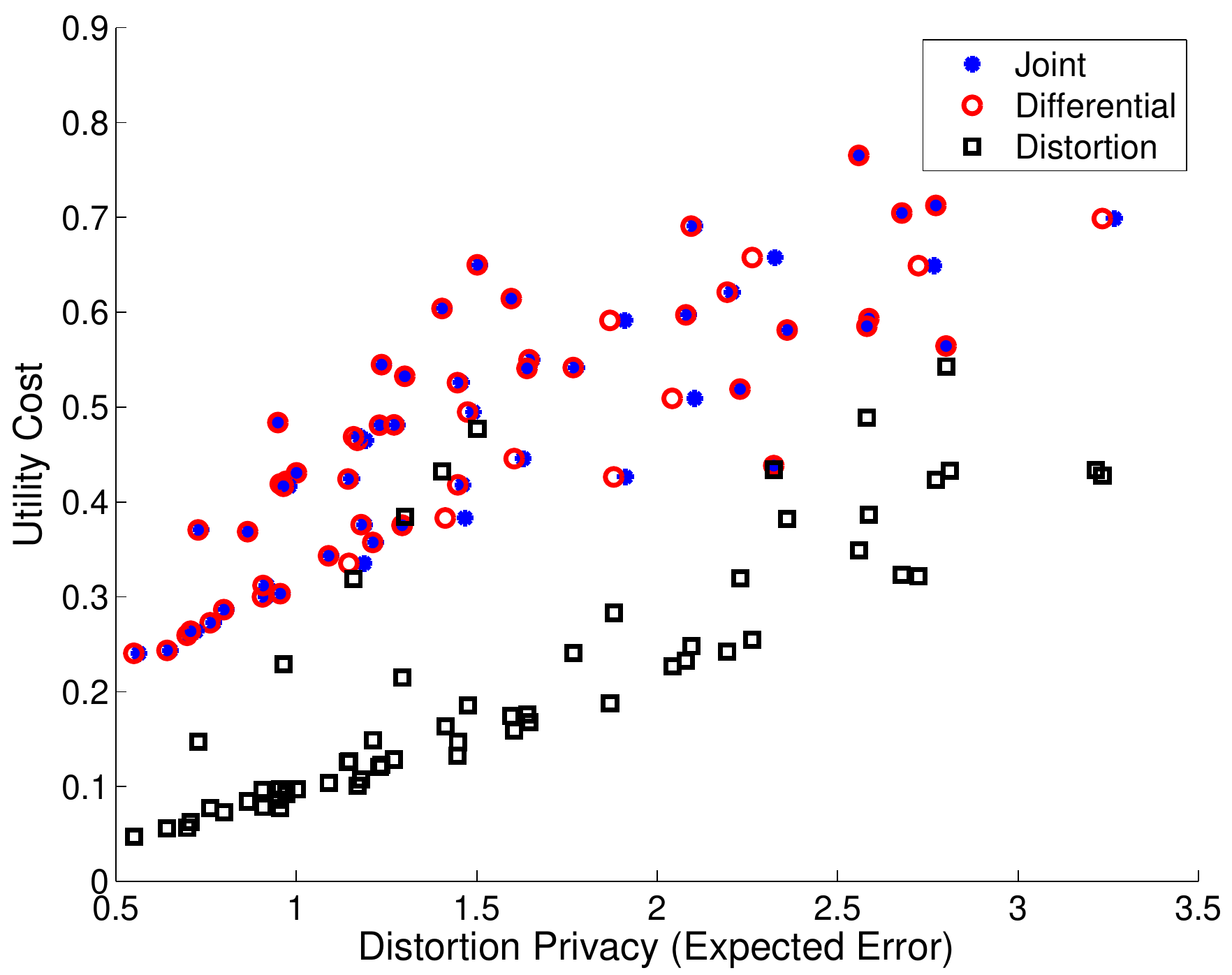}
      \label{fig:privacy_loop_privacy_vs_cost_optimalattack}
   }
   \centering
\caption{Privacy and Utility of optimal protection mechanisms, computed based on the evaluation scenario number 1 in Section~\ref{sec:analysis:firsteval}. Each dot in the plots corresponds to privacy of one user for one value of $\epsilon_m$.
}
\label{fig:privacy_loop_optimalattack}
\end{figure*}

We evaluate utility-maximizing optimal protection mechanisms with three different privacy constraints:
\begin{itemize}
  \item {\em Distortion Privacy Protection}, \eqref{eq:lp:user:utility-privacy:game}.
  \item {\em Differential Privacy Protection}, \eqref{eq:lp:user:utility-privacy:diff:mult}.
  \item {\em Joint Distortion-Differential Privacy Protection}, \eqref{eq:lp:user:utility-jointprivacy}.
\end{itemize}

We compare the effectiveness of these protection mechanisms against inference attacks by using the distortion privacy metric \eqref{eq:privacy:bayesian:average}. We consider two inference attacks:
\begin{itemize}
  \item {\em Optimal Attack}, \eqref{eq:lp:adversary:bayesian}.
  \item {\em Bayesian Inference Attack}, using the Bayes rule:
      \begin{align}\label{eq:attack:bayesian}
    q(\hat{s}|o) = \frac{\pi(\hat{s}) \cdot p(o | \hat{s})}{\pr{o}} = \frac{\pi(\hat{s}) \cdot p(o | \hat{s})}{\sum_s \pi(s) \cdot p(o | s)}
\end{align}
\end{itemize}

\subsection{Comparing Obfuscation Mechanisms}\label{sec:analysis:firsteval}

\paragraph*{\em Scenario 1.} Our first goal is to have a fair comparison between optimal distortion privacy mechanism and optimal differential mechanism. To this end, we set the privacy parameter $\epsilon_m$ to $\{0.15, 0.3, \cdots, 0.9\}$. 

For each user and each value of $\epsilon_m$,
\begin{enumerate}
  \item We compute the optimal differential privacy mechanism using \eqref{eq:lp:user:utility-privacy:diff:mult}. Let $p^*_{\epsilon_m}$ be the optimal mechanism.
  \item We run optimal attack \eqref{eq:lp:adversary:bayesian} on $p^*_{\epsilon_m}$, and compute the user's absolute distortion privacy as $AP(p^*_{\epsilon_m})$.
  \item We compute the optimal distortion privacy mechanism $p^*_{d_m}$ using \eqref{eq:lp:user:utility-privacy:game}. For this, we set the privacy lower-bound $d_m$ to $AP(p^*_{\epsilon_m})$. This enforces the distortion privacy mechanism to guarantee what the differential privacy mechanism provides.
  \item We compute the optimal joint distortion-differential privacy mechanism $p^*_{\epsilon_m, d_m}$ using \eqref{eq:lp:user:utility-jointprivacy}. We set the privacy lower-bounds to $\epsilon_m$ and $d_m$ for the differential and distortion constraints, respectively.
  \item We run optimal attack \eqref{eq:lp:adversary:bayesian} on both $p^*_{d_m}$ and $p^*_{\epsilon_m, d_m}$, and compute the user's absolute distortion privacy as $AP(p^*_{d_m})$ and $AP(p^*_{\epsilon_m, d_m})$, respectively.
  \item As a baseline for comparison, we run Bayesian inference attack \eqref{eq:attack:bayesian} on the three optimal mechanisms $p^*_{\epsilon_m}$, $p^*_{d_m}$, and $p^*_{\epsilon_m, d_m}$. 
\end{enumerate}

Figure~\ref{fig:privacy_loop_optimalattack} shows the results of our analysis, explained above. Distortion privacy is measured in km and is equivalent to the expected error of adversary in correctly estimating location of users. Figure~\ref{fig:privacy_loop_epsilon_vs_privacy_optimal} shows how expected privacy of users $AP(p^*_{\epsilon_m})$ decreases as we increase the value of the lower-bound on differential privacy $\epsilon_m$. Users have different secret probability distribution, with different randomness. However, as $\epsilon_m$ increases, expected error of adversary (the location privacy of users) converges down to below $1$km. Figure~\ref{fig:privacy_loop_privacy_vs_cost_optimalattack} plots the utility cost versus distortion privacy of each optimal protection mechanism. As we have set the privacy bound of the optimal distortion mechanism (and of course the optimal joint mechanism) to the privacy achieved by the optimal differential mechanism, we can make a fair comparison between their utility costs. We observe that the utility cost for achieving some level of distortion privacy is much higher for optimal differential and joint mechanisms compared with the optimal distortion mechanism. Note that the utility cost of differential and joint mechanisms are the same. So, distortion privacy bound does not impose more cost than what is already imposed by the differential privacy mechanism. 

As we set $d_m$ to $AP(p^*_{\epsilon_m})$, the user's distortion privacy in using optimal distortion and optimal differential mechanism is the same, when we confront them with the optimal attack \eqref{eq:lp:adversary:bayesian}. In Figure~\ref{fig:privacy_loop_diff_vs_bayes_inferenceattack}, however, we compare the effectiveness of these two mechanisms against Bayesian inference attack \eqref{eq:attack:bayesian}. It is interesting to observe that the optimal differential mechanism is more robust to such attacks compared to the optimal distortion mechanisms. This explains the extra utility cost due to optimal differential mechanisms.

\begin{figure}[t]
   \centering
      \includegraphics[width=0.98\columnwidth]{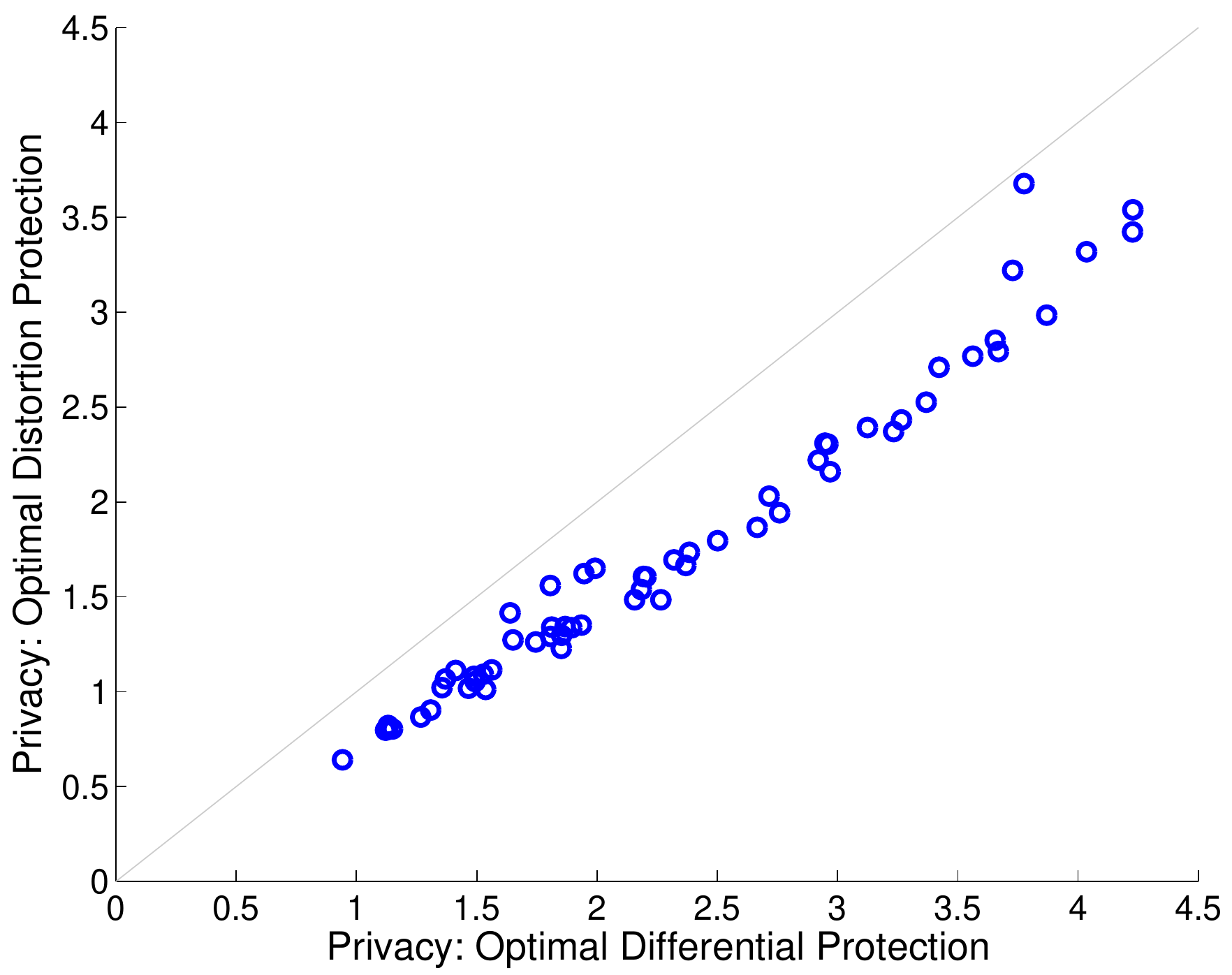}
      \caption{Distortion privacy of users against the Bayesian inference attack \eqref{eq:attack:bayesian} when using optimal differential privacy obfuscation versus using optimal distortion privacy obfuscation. Each dot represents privacy of one user for one value of $\epsilon_m$.}
      \label{fig:privacy_loop_diff_vs_bayes_inferenceattack}
\end{figure}
\begin{figure}[t]
   \centering
      \includegraphics[width=0.98\columnwidth]{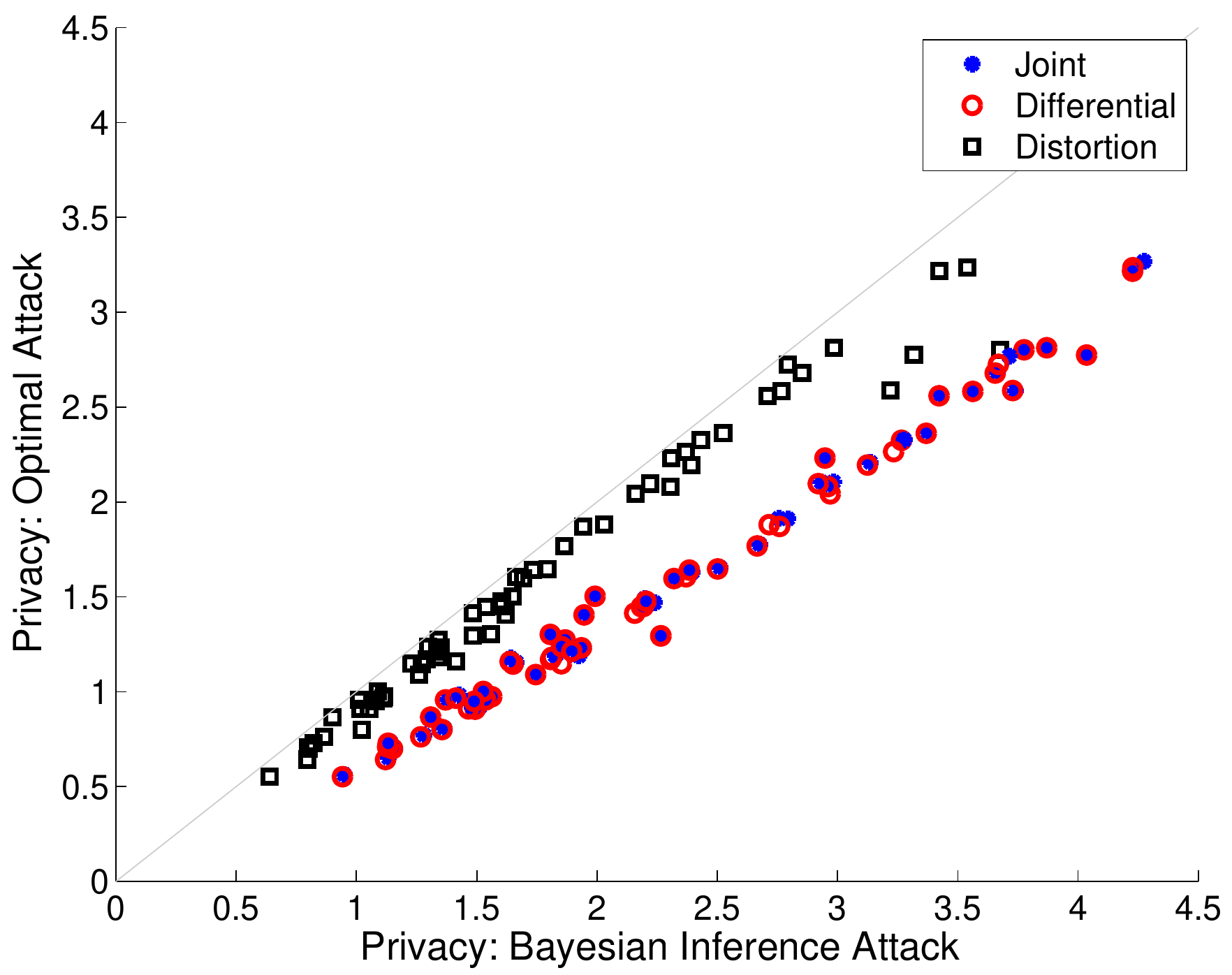}
      \caption{Distortion privacy of users using any of the three optimal mechanisms against the basic Bayesian inference attack \eqref{eq:attack:bayesian} versus their privacy against the optimal attack \eqref{eq:lp:adversary:bayesian}. Each dot represents privacy of one user for one value of $\epsilon_m$.}
      \label{fig:privacy_loop_inference_vs_optimal}
\end{figure}
\begin{figure}[t]
   \centering
      \includegraphics[width=0.98\columnwidth]{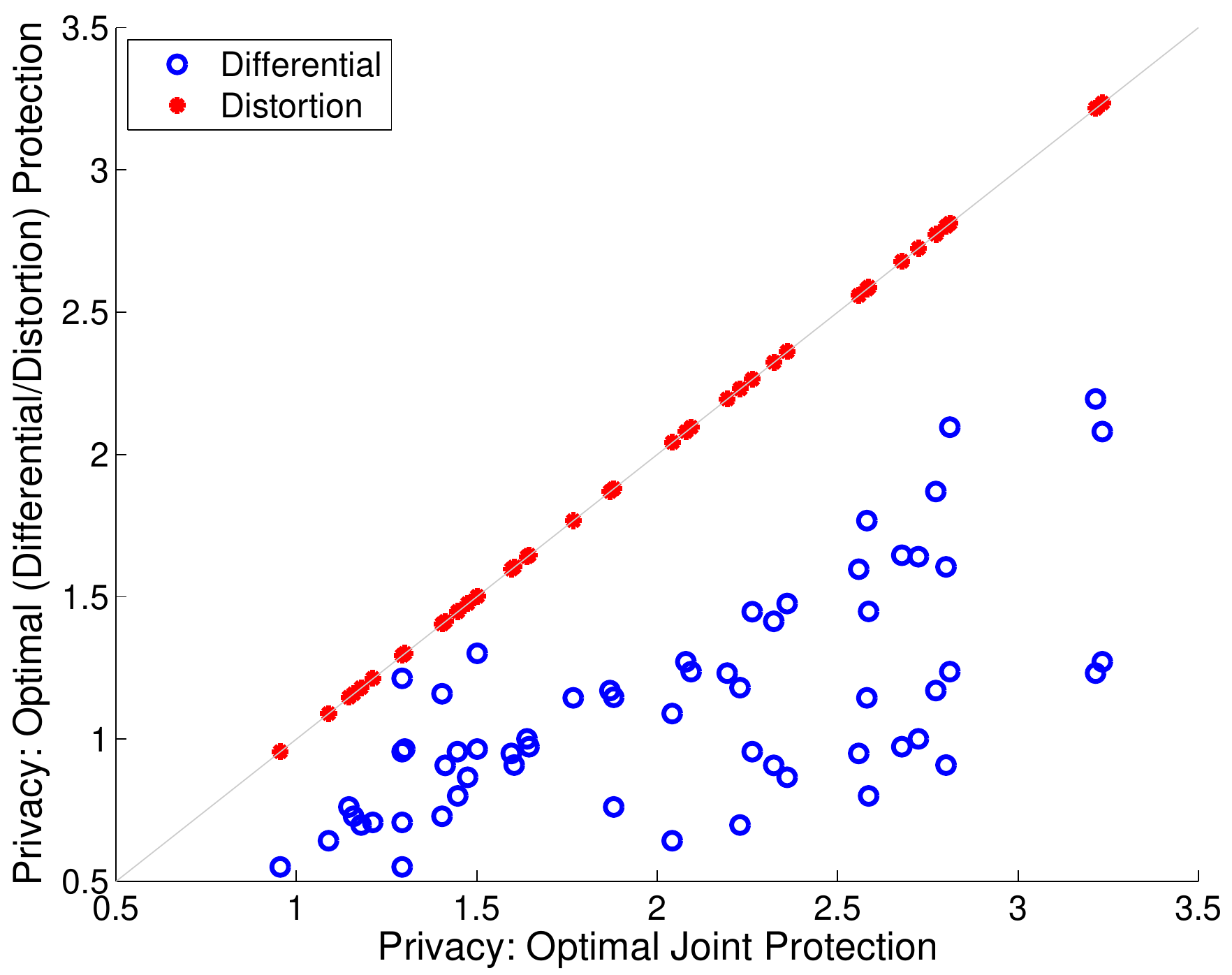}
      \caption{Users' privacy using optimal differential or distortion protection mechanism versus using optimal joint protection mechanism. Distortion privacy is computed using optimal attack \eqref{eq:lp:adversary:bayesian}.
      }
      \label{fig:privacy_alljoint_joint_vs_diffbayes_optimalattack}
\end{figure}

In Figure~\ref{fig:privacy_loop_inference_vs_optimal}, we compare the effectiveness of Bayesian inference attack \eqref{eq:attack:bayesian} and optimal attack \eqref{eq:lp:adversary:bayesian}. We show the results for all three optimal protection mechanisms. It is clear that optimal attack outperforms the Bayesian attack, as users have a relatively higher privacy level under the Bayesian inference. However, the difference is more obvious for the case of differential protection and joint protection mechanisms. The Bayesian attack overestimates users' privacy, as it ignores the distance function $d$, whereas the optimal attack minimizes the expected value of $d$ over all secrets and estimates.

\paragraph*{\em Scenario 2.} In this paper, we introduce the optimal joint distortion-differential protection mechanisms to provide us with the benefits of both mechanisms. Figure~\ref{fig:privacy_loop_privacy_vs_cost_optimalattack} shows that the optimal joint mechanism is not more costly than the two optimal distortion and differential mechanisms. It also shows that it guarantees the highest privacy for a certain utility cost. To further study the effectiveness of optimal joint mechanisms, we run the following evaluation scenario. 

We design optimal differential mechanisms for some values of $\epsilon_m$. And, we design optimal distortion mechanisms for some values of $d_m$ that are higher than the distortion privacy resulted from those differential privacy mechanisms. We also construct their joint mechanisms given the $\epsilon_m$ and $d_m$ parameters. Figure~\ref{fig:privacy_alljoint_joint_vs_diffbayes_optimalattack} shows how the optimal joint mechanism adapts itself to guarantee the maximum of the privacy levels guaranteed by optimal Bayesian and optimal differential mechanisms individually. This is clear from the fact that users' privacy for the optimal joint mechanism is equal to their privacy for distortion mechanism (that as we set in our scenario, they are higher than that of differential mechanisms). 

Thus, by adding the distortion privacy constraints in the design of optimal mechanisms, we can further increase the privacy of users (with the same utility cost) that cannot be otherwise achieved by only using differential mechanisms.

\paragraph*{\em Scenario 3.} In order to further investigate the relation between the privacy (and utility) outcome of the optimal joint mechanism and that of individual differential or distortion privacy mechanisms, we run the following set of experiments on all the available user profiles. 

\begin{enumerate}
  \item For any value of $\epsilon_m$ in $\{0.2, 0.4, \cdots, 1\}$, we compute the utility of optimal differential privacy mechanism as well as its privacy against optimal attack.
  \item For any value of $d_m$ in $\{0.5, 1, \cdots, d_m^{max}\}$, we compute the utility of optimal distortion privacy mechanism as well as its privacy against optimal attack. $d_m^{max}$ is dependent on $\pi$ and is the maximum value that the threshold can take (beyond which there is no solution to the optimization problem). 
  \item For any value of $\epsilon_m$ in $\{0.2, 0.4, \cdots, 1\}$, and for any value of $d_m$ in $\{0.5, 1, \cdots, d_m^{max}\}$, we compute the utility and privacy of the optimal joint mechanism.
\end{enumerate}

Figure~\ref{fig:joint_vs_distdiff} shows the results. By an experiment we refer to the comparison of privacy (or utility) of a joint mechanism (with bounds $\epsilon_m$, $d_m$) with the corresponding differential privacy mechanism (with bound $\epsilon_m$) and the corresponding distortion privacy mechanism (with bound $d_m$). Note that here the thresholds $\epsilon_m$ and $d_m$ are chosen independently as opposed to scenarios 1 (and also 2). We put the results of all the experiments next to each other in the x-axis. Therefore, any vertical cut on the Figure~\ref{fig:joint_vs_distdiff}'s plots contain three points for privacy/utility of $p^*_{\epsilon_m, d_m}$, $p^*_{\epsilon_m}$, and $p^*_{d_m}$. To better visualize the results, we have sorted all the experiments based on the privacy/utility of the joint mechanism.

As the results show, the privacy achieved by the optimal joint mechanism is equal to the maximum privacy that each of the individual differential/distortion mechanisms provides separately. This means that the user would indeed benefit from including a distortion privacy constraint based on her prior leakage into the design criteria of the optimal obfuscation mechanism. This comes at no extra utility cost for the user, as the utility graph shows. In fact, the utility cost of an optimal joint mechanism is not additive and instead is the maximum of the two components, which is the differential privacy mechanism in all tested experiments. The reason behind this is that the differential privacy component makes the joint obfuscation mechanism robust to the case where the background knowledge of the adversary includes not only the prior leakage but also other auxiliary information available to him. 

\begin{figure}[h]
    \centering
    \includegraphics[width=0.98\columnwidth]{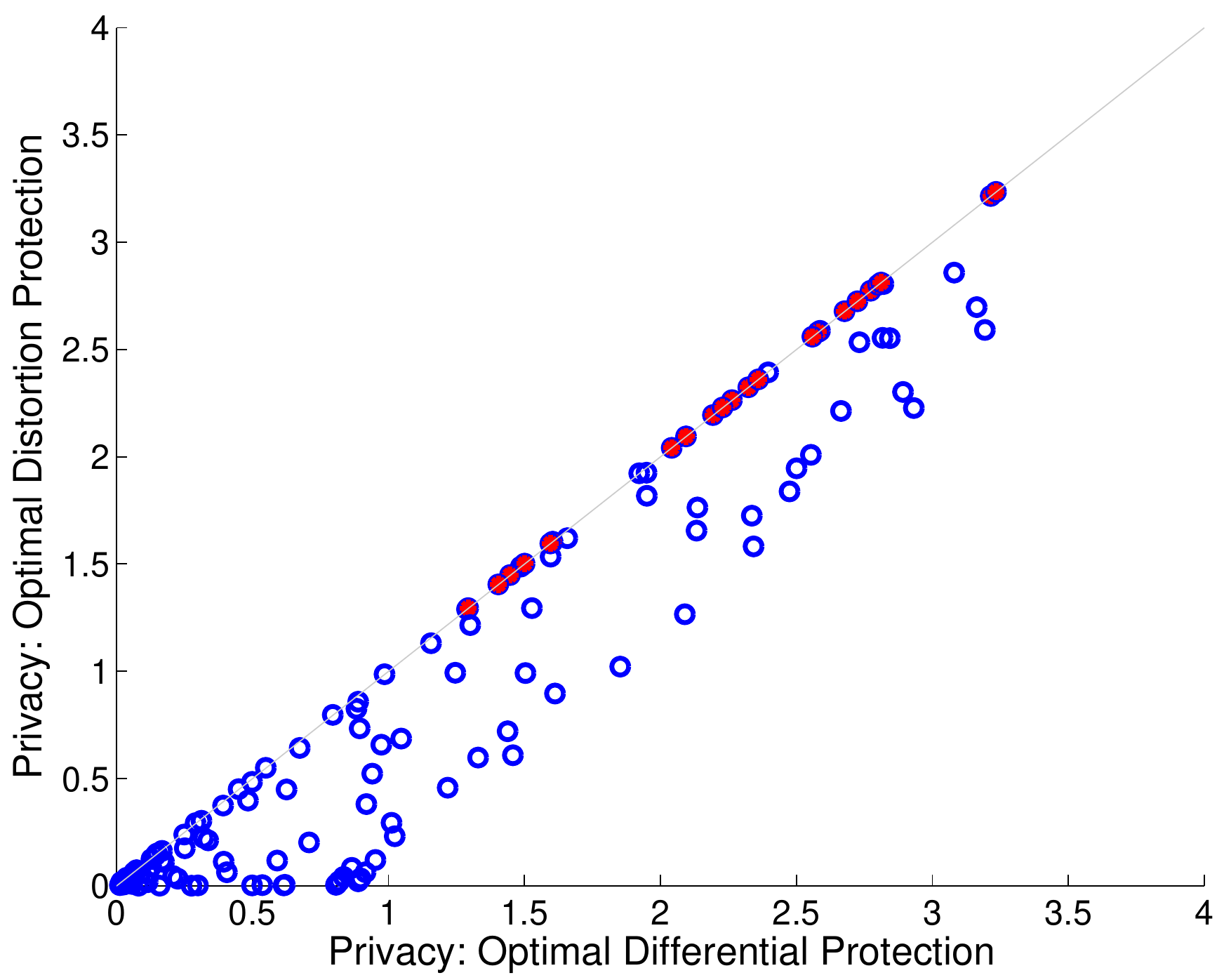}
    \caption{Users' privacy against the optimal attack using optimal differential protection versus using optimal distortion protection. Each circle represents privacy of a user for a different $\epsilon_m$ and for a different prior assumed in the attack. The red dots correspond to the cases where the probability $\pi$ assumed in designing the protection mechanism is the same as the attacker's knowledge.}
    \label{fig:privacy_prior_diff_vs_bayes_optimalattack}
\label{fig:prior}
\end{figure}

\begin{figure*}[t!]
   \centering
   \includegraphics[width=0.98\columnwidth]{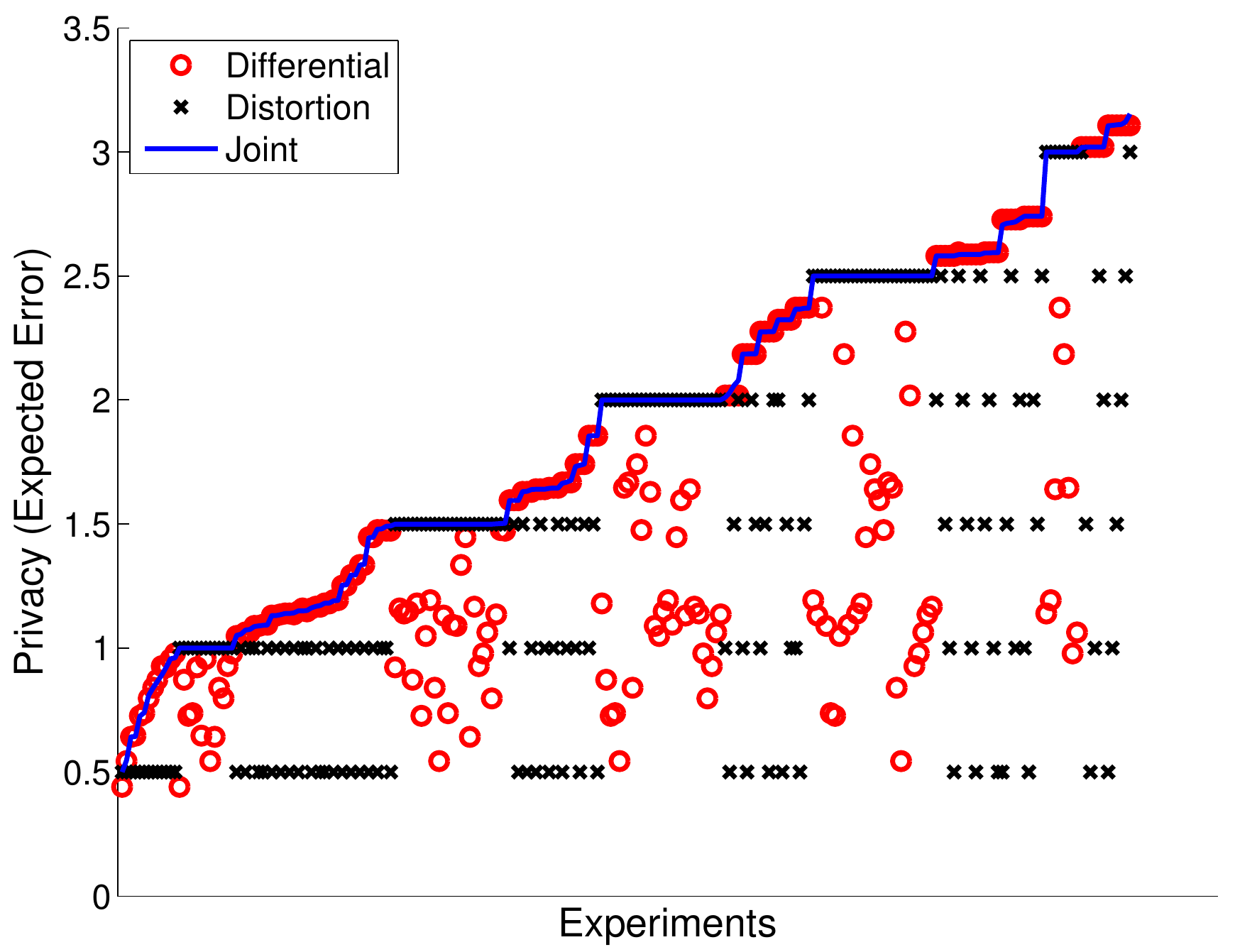}~~~~
   \includegraphics[width=0.98\columnwidth]{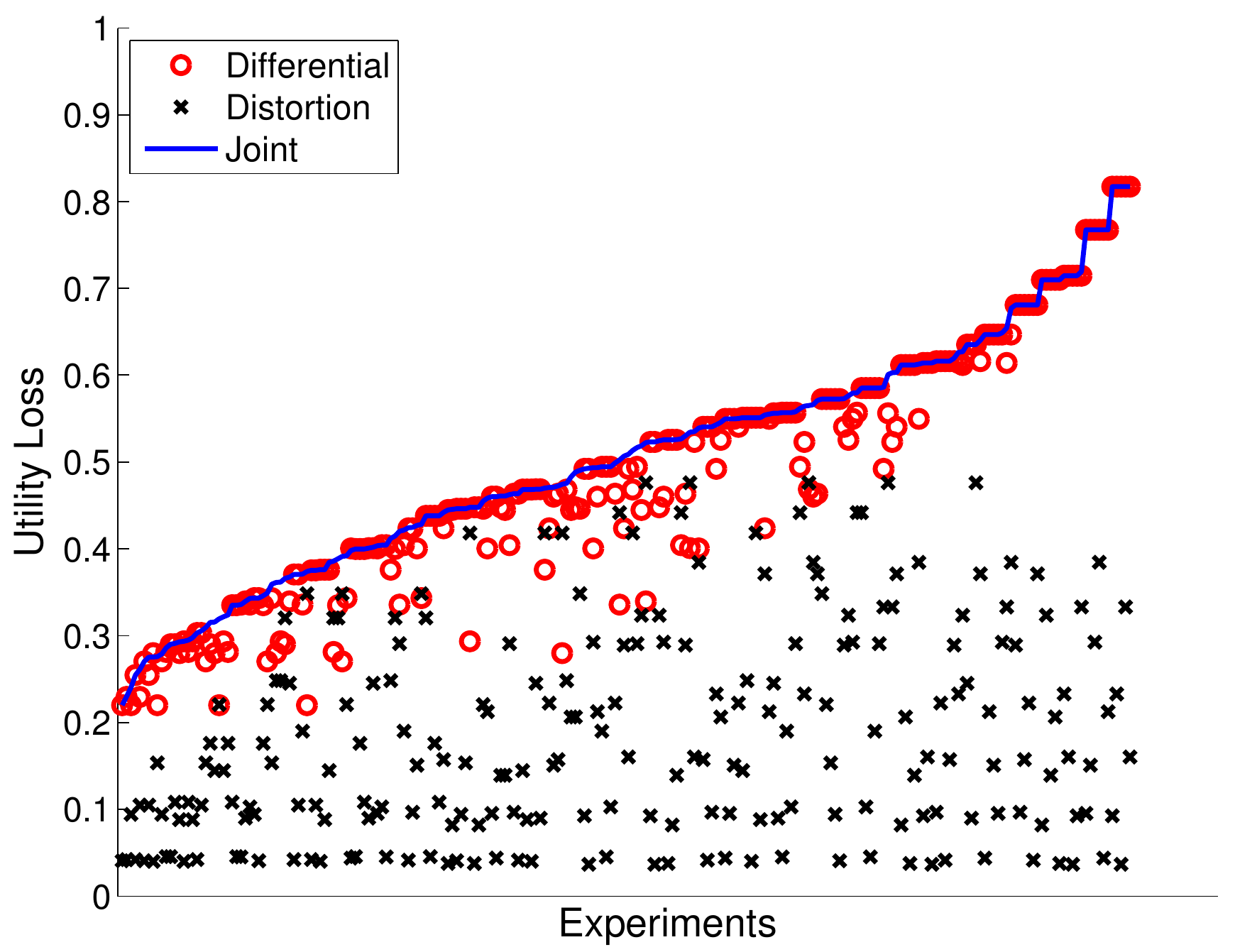}
\caption{The comparison between privacy and utility of optimal joint mechanism $p^*_{\epsilon_m, d_m}$ with the individual protection mechanisms $p^*_{\epsilon_m}$ and $p^*_{d_m}$, i.e., the mechanisms whose bounds are jointly respected in the optimal joint mechanism. The three points on each vertical line represent the results of one such comparison experiment for different values of pairs of privacy thresholds $(\epsilon_m, d_m)$ in $\{0.2, 0.4, \cdots, 1\} \times \{0.5, 1, \cdots, d_m^{max}\}$. }
\label{fig:joint_vs_distdiff}
\end{figure*}

\subsection{Evaluating the Effects of Prior}\label{sec:prior}

When using distortion metric in protecting privacy, we achieve optimal privacy given the user's estimated prior leakage modeled by probability distribution $\pi$ over the secrets. In the optimal attack against various protection mechanisms, a real adversary makes use of a prior distribution over the secrets. In this subsection, we evaluate to what extent a more informed adversary can harm privacy of users further than what is promised by the optimal protection mechanisms. Note that no matter what protection mechanism is used by the user, a more knowledgable adversary will learn more about the secret. In this section, our goal is not to show this obvious fact, but to evaluate how robust our mechanisms are with respect to adversaries with different knowledge accuracy levels.

To perform this analysis, we consider a scenario in which the adversary's assumption on $\pi$, for each user, has a lower level of uncertainty compared to $\pi$. This can happen in the real world when an adversary obtains new evidence about a user's secret that is not used by user for computing $\pi$. Let $\hat{\pi}$ be the other version of $\pi$ assumed by adversary, for a given user. For the sake of our analysis, we generate $\hat{\pi}$ by providing the adversary with more evidence about most frequently visited locations, e.g., home and work. This is equivalent to the scenario in which the adversary knows the user's significant locations, e.g., where the user lives and works. The entropy of $\hat{\pi}$ is less than that of $\pi$, hence it contains more information about the user's mobility.

We construct the protection mechanisms assuming $\pi$, and we attack them by optimal inference attacks, but assuming the lower entropy $\hat{\pi}$ priors. Figure~\ref{fig:prior} illustrates privacy of users for different assumptions of $\hat{\pi}$, using optimal differential protection versus optimal distortion protection (assuming $\pi$). We observe that a more informed adversary has a lower expected error. However, it further shows that an optimal differential protection mechanism compared to an optimal distortion mechanism is more robust to knowledgable adversaries. Note that we set $d_m$ to $AP(p^*_{\epsilon_m})$, according to scenario 1 in Section~\ref{sec:analysis:firsteval}. So, when $\hat{\pi} = \pi$, both optimal protection mechanisms guarantee the same level of privacy. However, as there is more information in $\hat{\pi}$ than in $\pi$, more information can be inferred from the optimal distortion mechanism compared to the differential mechanism.

\section{Conclusions}\label{sec:conclusion}

We have solved the problem of designing {\em optimal} user-centric obfuscation mechanisms for data sharing systems. We have proposed a novel methodology for designing such mechanisms against any {\em adaptive} inference attack, while maximizing users' utility. We have proposed a generic framework for quantitative privacy and utility, using which we formalize the problems of maximizing users' utility under a lower-bound constraint on their privacy. The major novelty of the paper is to solve these optimization problems for both state-of-the-art distortion and differential privacy metrics, for the generic case of any distance function between the secrets. Being generic with respect to the distance functions, enables us to formalize any sensitivity function on any type of secrets. We have also proposed a new privacy notion, joint distortion-differential privacy, and constructed its optimal mechanism that has the strengths of both metrics. We have provided linear program solutions for our optimization problems that provably achieve minimum utility loss under those privacy bounds. 

\section*{Acknowledgements}

We would like to thank the PC reviewers for their constructive feedback, and Kostas Chatzikokolakis for very useful discussions on this work.

\appendix

\section{Optimal Inference Attacks}\label{sec:optimalattack}

Given the user's protection mechanism $p^*$, the inference attack \eqref{eq:lp:adversary:bayesian} is a valid strategy for the adversary, as there is no dependency between the defender and attacker strategies in the case of differential privacy metric.

However, as the differential privacy metric (used in the protection mechanism) does not include any probability distribution on secrets, we can design an inference attack whose objective is to minimize the conditional expected error $E_s$:
\begin{align}\label{eq:error:conditional}
    E_s = \sum_{o, \hat{s}} p^*(o|s) \cdot q(\hat{s}|o) \cdot d(\hat{s}, s)
\end{align}
for all secrets $s$. This is a multi-objective optimization problem \cite{MarlerA04} that does not prefer any of the $E_s$ (for any secret) to another. Under no such preferences, the objective is to minimize $\sum_s E_s$, using weighted sum method with equal weight for each secret.

Thus, the following linear program constitutes the optimal inference attack, under the mentioned assumptions.
\begin{align}\label{eq:lp:adversary:diff}
    \minimize{q} & \sum_{s, o, \hat{s}} p^*(o|s) \cdot q(\hat{s}|o) \cdot d(\hat{s}, s) 
\end{align}

As all the wights of $E_s$ are positive ($= 1$), the minimum of \eqref{eq:lp:adversary:diff} is Pareto optimal \cite{Zadeh63}. Thus, minimizing \eqref{eq:lp:adversary:diff} is sufficient for Pareto optimality. The optimal point in a multi-objective optimization (as in our case) is Pareto optimal ``if there is no other point that improves at least one objective function without detriment to another function'' \cite{MarlerA04, Pareto06}.

An alternative approach is to use the min-max formulation, and minimize the maximum conditional expected error $E_s$ over all secrets $s$. For this, we introduce a new unknown parameter $y$ (that will be the maximum $E_s$). The following linear program solves the optimal inference attack using the min-max formulation. This also provides a necessary condition for the Pareto optimality \cite{Miettinen99}.
\begin{subequations}\label{eq:lp:adversary:diff:minmax}
\begin{align}
    \minimize{q} & y \\
    \st & \sum_{o,\hat{s}} p^*(o|s) \cdot q(\hat{s}|o) \cdot d(\hat{s}, s) \leq y \ , \forall s 
\end{align}
\end{subequations}

We can also consider the expected error conditioned on both secret $s$ and estimate $\hat{s}$ as the adversary's objective to minimize. So, we can use $E_{\hat{s},s} = \pr{\hat{s}|s} \cdot d(\hat{s},s)$ instead of $\sum_{\hat{s}} \pr{\hat{s}|s} \cdot d(\hat{s},s)$ in \eqref{eq:error:conditional}, and use the same approach as in \eqref{eq:lp:adversary:diff:minmax}. The following linear program finds the optimal inference attack that minimizes the conditional expected estimation error over all $s$ and $\hat{s}$, using the min-max formulation.
\begin{subequations}\label{eq:lp:adversary:diff:minmax2}
\begin{align}
    \minimize{q} & y \\
    \st & \sum_o p^*(o|s) \cdot q(\hat{s}|o) \cdot d(\hat{s}, s) \leq y \ , \forall s, \hat{s} 
\end{align}
\end{subequations}

Overall, we prefer the linear program \eqref{eq:lp:adversary:diff} as it has the least number of constraints among the above three. We can also use \eqref{eq:lp:adversary:bayesian} for comparison of optimal protection mechanisms based on distortion and differential metrics.

\section{Proof of Theorem 1}\label{sec:proof}

We construct \eqref{eq:lp:user:utility-privacy:game} from \eqref{eq:lp:user:utility-privacy:nested}. In \eqref{eq:lp:user:utility-privacy:nested}, we condition the optimal obfuscation $p^*$ on its corresponding optimal inference (best response) attack $q^*$. So, for any observable $o$, the inference strategy $q^*(.|o)$ is the one that, by definition of the best response, minimizes the expected error
\begin{align}\label{eq:proof:expectederror}
    \sum_{\hat{s}} q(\hat{s}|o) \sum_s \pi(s) \cdot p(o|s) \cdot d(\hat{s}, s)
\end{align}
Thus, the privacy value \eqref{eq:lp:user:utility-privacy:nested:c1} to be guaranteed is
\begin{align} \label{eq:proof:minq}
    & \sum_{s, o, \hat{s}} \pi(s) \cdot p(o|s) \cdot q^*(\hat{s}|o) \cdot d(\hat{s}, s) = \nonumber\\
    & = \sum_o \minimize{q(.|o)} \sum_{\hat{s}} q(\hat{s}|o) \sum_{s} \pi(s) \cdot p(o|s) \cdot d(\hat{s}, s)
\end{align}

Note that \eqref{eq:proof:expectederror} is an average of $\sum_s \pi(s) \cdot p(o|s) \cdot d(\hat{s}, s)$ over $\hat{s}$, and thus it must be larger or equal to the smallest value of it for a particular $\hat{s}$.
\begin{align}\label{eq:proof:ineq1}
    & \minimize{q(.|o)} \sum_{\hat{s}} q(\hat{s}|o) \sum_{s} \pi(s) \cdot p(o|s) \cdot d(\hat{s}, s) \nonumber\\
    & \ge \minimize{\hat{s}} \sum_s \pi(s) \cdot p(o|s) \cdot d(\hat{s}, s)
\end{align}

Let $q'(.|o)$ be a conditional probability distribution function such that for any given observable $o$,
\begin{align}\label{eq:proof:q}
q'(s'|o) = \left\{
  \begin{array}{l l}
    1 & \text{if $s' = \argmin{\hat{s}} \sum_s \pi(s) \cdot p(o|s) \cdot d(\hat{s}, s)$}\\
    0 & \text{otherwise}
  \end{array} \right.
\end{align}

Note that $q' \in Q$ is a pure strategy that represents one particular inference attack. Moreover, \eqref{eq:proof:minq} constructs $q^*$ such that it optimizes \eqref{eq:proof:expectederror} over the set of all mixed strategies $Q$ that include all the pure strategies. The minimum value for the optimization over the set of all mixed strategies is clearly less than or equal to the minimum value for the optimization over its subset (the pure strategies). Thus, the following inequality holds.
\begin{align}\label{eq:proof:ineq2}
    & \sum_{\hat{s}} q^*(\hat{s}|o) \sum_{s} \pi(s) \cdot p(o|s) \cdot d(\hat{s}, s) \nonumber\\
    & = \minimize{q(.|o)} \sum_{\hat{s}} q(\hat{s}|o) \sum_{s} \pi(s) \cdot p(o|s) \cdot d(\hat{s}, s) \nonumber\\
    & \leq \sum_{\hat{s}} q'(\hat{s}|o) \sum_{s} \pi(s) \cdot p(o|s) \cdot d(\hat{s}, s) \nonumber\\
    & = \minimize{\hat{s}} \sum_s \pi(s) \cdot p(o|s) \cdot d(\hat{s}, s)
\end{align}

Therefore, from inequalities \eqref{eq:proof:ineq1} and \eqref{eq:proof:ineq2} we have
\begin{align}
& \sum_{s, o, \hat{s}} \pi(s) \cdot p(o|s) \cdot q^*(\hat{s}|o) \cdot d(\hat{s}, s) \nonumber\\
& = \sum_o \minimize{\hat{s}} \sum_s \pi(s) \cdot p(o|s) \cdot d(\hat{s}, s) 
= \sum_o x(o)
\end{align}
where $x(o) = \minimize{\hat{s}} \sum_s \pi(s) \cdot p(o|s) \cdot d(\hat{s}, s)$, or equivalently $x(o) \leq \sum_s \pi(s) \cdot p(o|s) \cdot d(\hat{s}, s), \forall \hat{s}$.

Thus, the constraint \eqref{eq:lp:user:utility-privacy:nested:c1} in the linear program \eqref{eq:lp:user:utility-privacy:nested} is equivalent to (and can be replaced by) the constraints \eqref{eq:lp:user:utility-privacy:game:c1} and \eqref{eq:lp:user:utility-privacy:game:c2} in the linear program \eqref{eq:lp:user:utility-privacy:game}.

\section{Approximating The Optimal Mechanisms}\label{sec:approx}

\begin{figure*}[t!]
    \centering
    \includegraphics[width=0.98\columnwidth]{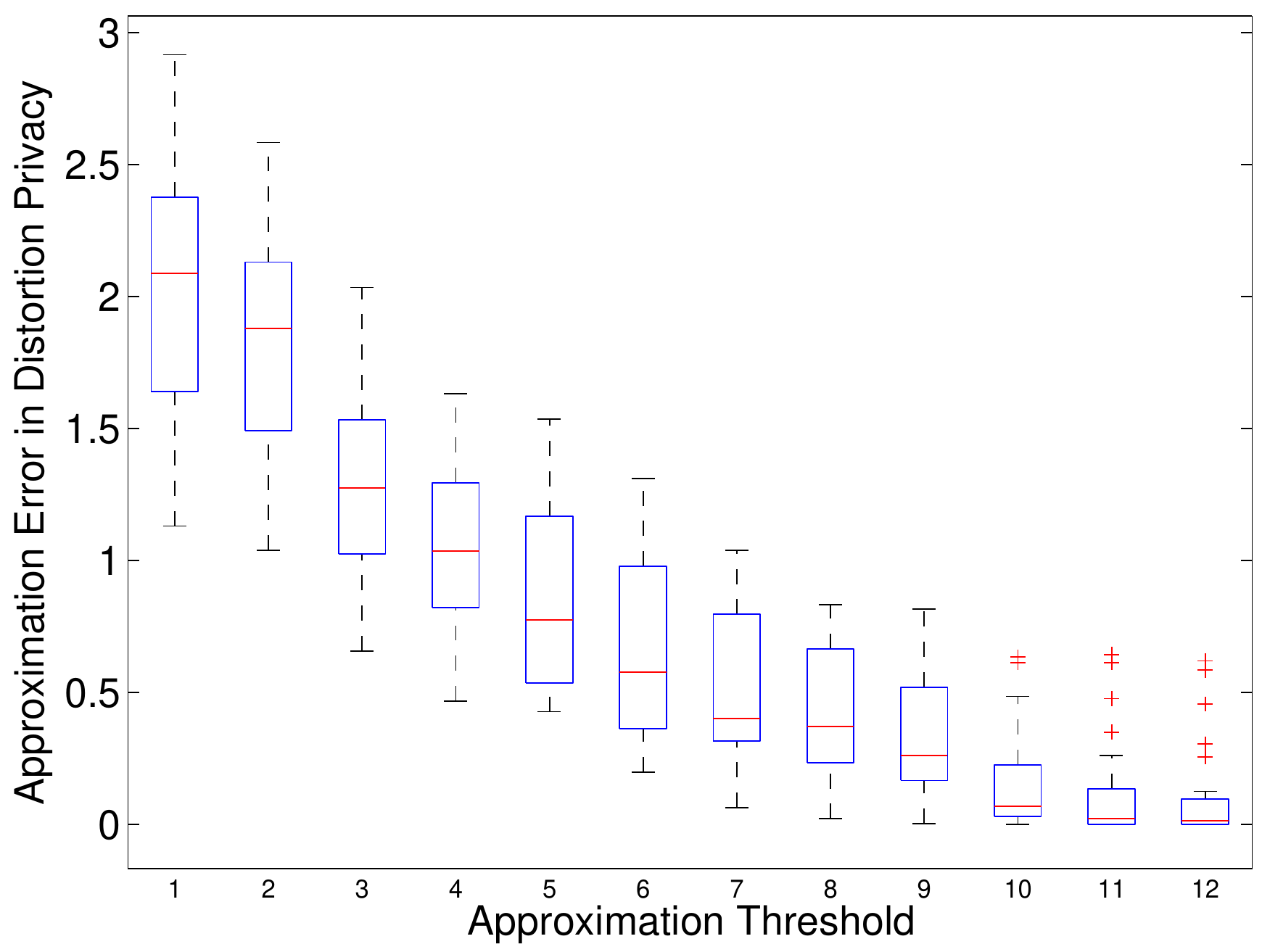}~~~~
    \includegraphics[width=0.98\columnwidth]{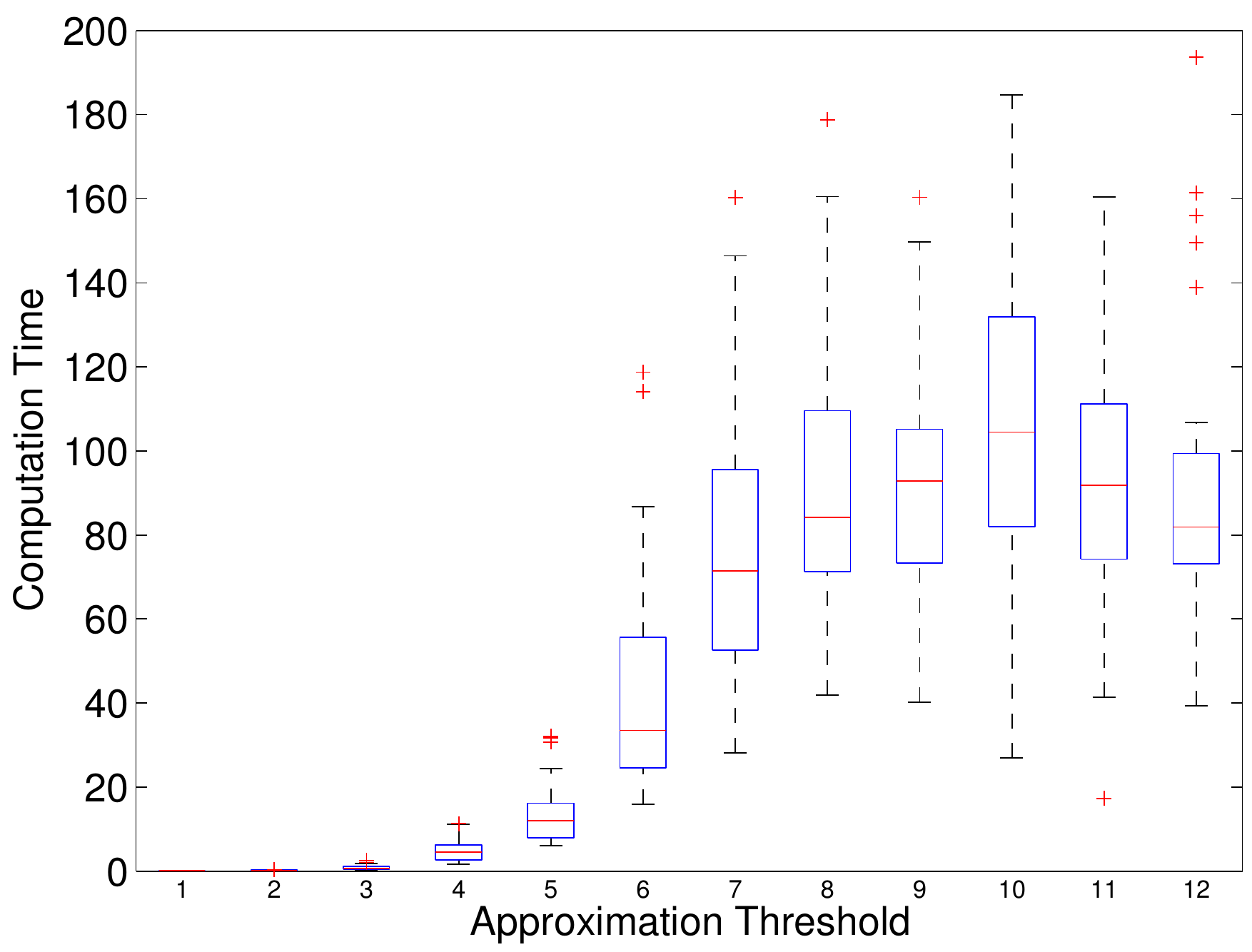}
    \caption{Error (in km) and computation time (in sec) of approximate optimal joint distortion-differential privacy protection mechanisms, by considering a subset of optimization constraints. We  consider only the constraints for which the distance $d()$ (in km) between observation and secret and the distance between two secrets is less than the x-axis. In the left-hand side figure, the y-axis shows the distribution of the difference between privacy of users with and without approximation. In the right-hand side figure, the y-axis represents the total computation time of solving the linear program of the approximate optimal joint mechanism in Matlab on a machine with 4 core CPU model Intel(R) Xeon(R) 2.40GHz. The central mark in each box shows the median value, computed over all users. The boxplot also shows the $25^{th}$, and $75^{th}$ percentiles as well as the outliers.}
    \label{fig:privacy_appx}
\end{figure*}

Here, we briefly discuss the computational aspects of the design of optimal protection mechanisms. Although the solution to linear programs provides us with the optimal protection mechanism, their computation cost is quadratic (for distortion mechanisms) and cubic (for differential mechanisms) in the cardinality of the set of secrets and observables. Providing privacy for a large set of secrets needs a high computation budget. To establish a balance between the computation budget and privacy requirements, we can make use of approximation techniques to design optimal protection mechanisms. We explore some possible approaches.

Linear programming \cite{BoydV04} is one of the fundamental areas of mathematics and computer science, and there is a variety of algorithms to solve a linear program. Surveying those algorithms and evaluating their efficiencies is out of the scope of this paper. These algorithms search the set of feasible solutions of a problem for finding the optimal solution that meets the constraints. Many of these algorithms are iterative and they converge to the optimal solution as the number of iterations increases \cite{NesterovN93, GrotschelMLS81}. Thus, a simple approximation method is to stop the iterative algorithm when our computation budget is over. Other approximation methods exist. For example, \cite{FariasV06} suggests a sampling algorithm to select a subset of constraints in an optimization problem to speed up the computation. Moreover, we can rely on the particular structure of secrets to reduce the set of constraints \cite{BordenabeCP14}.

We can implement those approximation techniques to solve approximately optimal protection mechanisms in an affordable time. Furthermore, we can rely on the definition of privacy to find the constraints that have a minor contribution to the design of the protection mechanism. In this section, we study one approximation method, following the intuition behind the differential privacy bound: we remove the constraints for which the distance $d(s, s')$ is larger than a threshold. We can justify this by observing that, in the definition of differential privacy metric \eqref{eq:privacy:diff:gneric:mult}, the privacy is more protected when for secrets $s, s'$, the distance $d(s, s')$ is small. To put this in perspective, note that if we use the original definition of differential privacy, there would not be any constraint if $d(s,s') > 1$. We also apply this approximation to the distance between observables and secrets.

In Figure~\ref{fig:privacy_appx}, we show the privacy loss of users as well as the speed-up of their computation due to approximation. We performed the computation on a machine with 4 core CPU model Intel(R) Xeon(R) 2.40GHz. As we increase the approximation threshold (which is the distance beyond it we ignore the constraints), the approximation error goes to zero. This suggests that, for a large set of secrets, if we choose a relatively small threshold the approximated protection mechanism provides almost the same privacy level as in the optimal solution. The computation time, however, increases as the approximation error decreases (due to increasing the approximation threshold). Figure~\ref{fig:privacy_appx} captures such a tradeoff of our approximation method. 


\begin{thebibliography}{10}

\bibitem{alvim2012differential}
M.~S. Alvim, M.~E. Andr{\'e}s, K.~Chatzikokolakis, P.~Degano, and
  C.~Palamidessi.
\newblock Differential privacy: on the trade-off between utility and
  information leakage.
\newblock In {\em Formal Aspects of Security and Trust}, pages 39--54.
  Springer, 2012.

\bibitem{alvim2011relation}
M.~S. Alvim, M.~E. Andr{\'e}s, K.~Chatzikokolakis, and C.~Palamidessi.
\newblock On the relation between differential privacy and quantitative
  information flow.
\newblock In {\em Automata, Languages and Programming}, pages 60--76. Springer,
  2011.

\bibitem{AlvimACP11}
M.~S. Alvim, M.~E. Andr{\'e}s, K.~Chatzikokolakis, and C.~Palamidessi.
\newblock Quantitative information flow and applications to differential
  privacy.
\newblock In {\em Foundations of security analysis and design VI}. 2011.

\bibitem{AndresBCP13}
M.~E. Andr{\'e}s, N.~E. Bordenabe, K.~Chatzikokolakis, and C.~Palamidessi.
\newblock Geo-indistinguishability: Differential privacy for location-based
  systems.
\newblock In {\em Proceedings of the 2013 ACM SIGSAC conference on Computer \&
  communications security}, pages 901--914. ACM, 2013.

\bibitem{BarrenoNSJT06}
M.~Barreno, B.~Nelson, R.~Sears, A.~D. Joseph, and J.~Tygar.
\newblock Can machine learning be secure?
\newblock In {\em Proceedings of the ACM Symposium on Information, computer and
  communications security}, 2006.

\bibitem{BartheKOZ12}
G.~Barthe, B.~K{\"o}pf, F.~Olmedo, and S.~Zanella~B{\'e}guelin.
\newblock Probabilistic relational reasoning for differential privacy.
\newblock {\em ACM SIGPLAN Notices}, 2012.

\bibitem{Berger85}
J.~O. Berger.
\newblock {\em Statistical decision theory and Bayesian analysis}.
\newblock Springer, 1985.

\bibitem{bilogrevic2015predicting}
I.~Bilogrevic, K.~Huguenin, S.~Mihaila, R.~Shokri, and J.-P. Hubaux.
\newblock Predicting users' motivations behind location check-ins and utility
  implications of privacy protection mechanisms.
\newblock In {\em In Network and Distributed System Security (NDSS) Symposium},
  2015.

\bibitem{BordenabeCP14}
N.~E. Bordenabe, K.~Chatzikokolakis, and C.~Palamidessi.
\newblock Optimal geo-indistinguishable mechanisms for location privacy.
\newblock In {\em Proceedings of the 16th ACM conference on Computer and
  communications security}, 2014.

\bibitem{BoydV04}
S.~P. Boyd and L.~Vandenberghe.
\newblock {\em Convex optimization}.
\newblock Cambridge university press, 2004.

\bibitem{BrennerN10}
H.~Brenner and K.~Nissim.
\newblock Impossibility of differentially private universally optimal
  mechanisms.
\newblock In {\em Foundations of Computer Science (FOCS), 2010 51st Annual IEEE
  Symposium on}, pages 71--80. IEEE, 2010.

\bibitem{BrickellS08}
J.~Brickell and V.~Shmatikov.
\newblock The cost of privacy: Destruction of data-mining utility in anonymized
  data publishing.
\newblock In {\em Proceedings of the 14th ACM SIGKDD International Conference
  on Knowledge Discovery and Data Mining}, KDD '08, pages 70--78, New York, NY,
  USA, 2008. ACM.

\bibitem{BrucknerS11}
M.~Br{\"u}ckner and T.~Scheffer.
\newblock Stackelberg games for adversarial prediction problems.
\newblock In {\em 17th ACM SIGKDD International Conference on Knowledge
  Discovery and Data Mining (KDD 2011)}, 2011.

\bibitem{brunton2011vernacular}
F.~Brunton and H.~Nissenbaum.
\newblock Vernacular resistance to data collection and analysis: A political
  theory of obfuscation.
\newblock {\em First Monday}, 16(5), 2011.

\bibitem{ChatzikokolakisABP13}
K.~Chatzikokolakis, M.~E. Andr{\'e}s, N.~E. Bordenabe, and C.~Palamidessi.
\newblock Broadening the scope of differential privacy using metrics.
\newblock In {\em Privacy Enhancing Technologies}, pages 82--102. Springer,
  2013.

\bibitem{chatzikokolakis2008anonymity}
K.~Chatzikokolakis, C.~Palamidessi, and P.~Panangaden.
\newblock Anonymity protocols as noisy channels.
\newblock {\em Information and Computation}, 206(2-4):378--401, 2008.

\bibitem{chatzikokolakis2014predictive}
K.~Chatzikokolakis, C.~Palamidessi, and M.~Stronati.
\newblock A predictive differentially-private mechanism for mobility traces.
\newblock In {\em Privacy Enhancing Technologies}, pages 21--41. Springer
  International Publishing, 2014.

\bibitem{ConitzerT06}
V.~Conitzer and T.~Sandholm.
\newblock Computing the optimal strategy to commit to.
\newblock In {\em Proceedings of the 7th ACM conference on Electronic
  commerce}, 2006.

\bibitem{danezis2013you}
G.~Danezis and C.~Troncoso.
\newblock You cannot hide for long: de-anonymization of real-world dynamic
  behaviour.
\newblock In {\em Proceedings of the 12th ACM workshop on Workshop on privacy
  in the electronic society}, pages 49--60. ACM, 2013.

\bibitem{diaz2003towards}
C.~Diaz, S.~Seys, J.~Claessens, and B.~Preneel.
\newblock Towards measuring anonymity.
\newblock In {\em Privacy Enhancing Technologies}, pages 54--68. Springer
  Berlin Heidelberg, 2003.

\bibitem{Dwork06}
C.~Dwork.
\newblock Differential privacy.
\newblock In {\em Automata, languages and programming}, pages 1--12. Springer,
  2006.

\bibitem{DworkMNS06}
C.~Dwork, F.~McSherry, K.~Nissim, and A.~Smith.
\newblock Calibrating noise to sensitivity in private data analysis.
\newblock In {\em Theory of Cryptography}, pages 265--284. Springer, 2006.

\bibitem{FariasV06}
V.~F. Farias and B.~Van~Roy.
\newblock Tetris: A study of randomized constraint sampling.
\newblock In {\em Probabilistic and Randomized Methods for Design Under
  Uncertainty}. 2006.

\bibitem{geng2012optimal}
Q.~Geng and P.~Viswanath.
\newblock The optimal mechanism in differential privacy.
\newblock {\em arXiv preprint arXiv:1212.1186}, 2012.

\bibitem{GhoshRS09}
A.~Ghosh, T.~Roughgarden, and M.~Sundararajan.
\newblock Universally utility-maximizing privacy mechanisms.
\newblock In {\em Proceedings of the 41st annual ACM symposium on Theory of
  computing}, pages 351--360. ACM, 2009.

\bibitem{GhoshRS12}
A.~Ghosh, T.~Roughgarden, and M.~Sundararajan.
\newblock Universally utility-maximizing privacy mechanisms.
\newblock {\em SIAM Journal on Computing}, 41(6):1673--1693, 2012.

\bibitem{GrotschelMLS81}
M.~Gr{\"o}tschel, L.~Lov{\'a}sz, and A.~Schrijver.
\newblock The ellipsoid method and its consequences in combinatorial
  optimization.
\newblock {\em Combinatorica}, 1981.

\bibitem{GupteS10}
M.~Gupte and M.~Sundararajan.
\newblock Universally optimal privacy mechanisms for minimax agents.
\newblock In {\em Proceedings of the twenty-ninth ACM SIGMOD-SIGACT-SIGART
  symposium on Principles of database systems}, 2010.

\bibitem{he2014blowfish}
X.~He, A.~Machanavajjhala, and B.~Ding.
\newblock Blowfish privacy: Tuning privacy-utility trade-offs using policies.
\newblock In {\em Proceedings of the 2014 ACM SIGMOD international conference
  on Management of data}, pages 1447--1458. ACM, 2014.

\bibitem{HuangJNRT11}
L.~Huang, A.~D. Joseph, B.~Nelson, B.~I. Rubinstein, and J.~Tygar.
\newblock Adversarial machine learning.
\newblock In {\em Proceedings of the 4th ACM workshop on Security and
  artificial intelligence}, 2011.

\bibitem{ioannidis2014privacy}
S.~Ioannidis, A.~Montanari, U.~Weinsberg, S.~Bhagat, N.~Fawaz, and N.~Taft.
\newblock Privacy tradeoffs in predictive analytics.
\newblock {\em arXiv preprint arXiv:1403.8084}, 2014.

\bibitem{kifer2011no}
D.~Kifer and A.~Machanavajjhala.
\newblock No free lunch in data privacy.
\newblock In {\em Proceedings of the 2011 ACM SIGMOD International Conference
  on Management of data}, pages 193--204. ACM, 2011.

\bibitem{kiukkonen2010towards}
N.~Kiukkonen, J.~Blom, O.~Dousse, D.~Gatica-Perez, and J.~Laurila.
\newblock Towards rich mobile phone datasets: Lausanne data collection
  campaign.
\newblock {\em Proc. ICPS, Berlin}, 2010.

\bibitem{KopfB07}
B.~K{\"o}pf and D.~Basin.
\newblock An information-theoretic model for adaptive side-channel attacks.
\newblock In {\em Proceedings of the 14th ACM conference on Computer and
  communications security}, 2007.

\bibitem{KorzhykYKCT11}
D.~Korzhyk, Z.~Yin, C.~Kiekintveld, V.~Conitzer, and M.~Tambe.
\newblock Stackelberg vs. {Nash} in security games: An extended investigation
  of interchangeability, equivalence, and uniqueness.
\newblock {\em Journal of Artificial Intelligence Research}, 41:297--327,
  May--August 2011.

\bibitem{LiHRMM10}
C.~Li, M.~Hay, V.~Rastogi, G.~Miklau, and A.~McGregor.
\newblock Optimizing linear counting queries under differential privacy.
\newblock In {\em Proceedings of the twenty-ninth ACM SIGMOD-SIGACT-SIGART
  symposium on Principles of database systems}, pages 123--134. ACM, 2010.

\bibitem{LiuC09}
W.~Liu and S.~Chawla.
\newblock A game theoretical model for adversarial learning.
\newblock In {\em IEEE International Conference on Data Mining Workshops (ICDM
  2009)}, 2009.

\bibitem{Mackay03}
D.~J. MacKay.
\newblock {\em Information theory, inference and learning algorithms}.
\newblock Cambridge university press, 2003.

\bibitem{ManshaeiZABH11}
M.~Manshaei, Q.~Zhu, T.~Alpcan, T.~Basar, and J.-P. Hubaux.
\newblock Game theory meets network security and privacy.
\newblock {\em ACM Computing Surveys}, 45(3), 2012.

\bibitem{MardzielAHC14}
P.~Mardziel, M.~S. Alvim, M.~Hicks, and M.~R. Clarkson.
\newblock Quantifying information flow for dynamic secrets.
\newblock In {\em IEEE Symposium on Security and Privacy}, 2014.

\bibitem{AlvimCPS12}
S.~A. Mario, K.~Chatzikokolakis, C.~Palamidessi, and G.~Smith.
\newblock Measuring information leakage using generalized gain functions.
\newblock {\em 2012 IEEE 25th Computer Security Foundations Symposium}, 2012.

\bibitem{MarlerA04}
R.~T. Marler and J.~S. Arora.
\newblock Survey of multi-objective optimization methods for engineering.
\newblock {\em Structural and multidisciplinary optimization}, 26(6):369--395,
  2004.

\bibitem{micinski2013empirical}
K.~Micinski, P.~Phelps, and J.~S. Foster.
\newblock An empirical study of location truncation on android.
\newblock {\em Weather}, 2:21, 2013.

\bibitem{Miettinen99}
K.~Miettinen.
\newblock {\em Nonlinear multiobjective optimization}, volume~12.
\newblock Springer, 1999.

\bibitem{NesterovN93}
Y.~E. Nesterov and A.~Nemirovskii.
\newblock {\em Interior point polynomial methods in convex programming: Theory
  and algorithms}.
\newblock sIAM Publications. sIAM, Philadelphia, UsA, 1993.

\bibitem{nissim2007smooth}
K.~Nissim, S.~Raskhodnikova, and A.~Smith.
\newblock Smooth sensitivity and sampling in private data analysis.
\newblock In {\em Proceedings of the thirty-ninth annual ACM symposium on
  Theory of computing}, pages 75--84. ACM, 2007.

\bibitem{Pareto06}
V.~Pareto.
\newblock {\em Manuale di economia politica}, volume~13.
\newblock Societa Editrice, 1906.

\bibitem{ParuchuriPMTOK08}
P.~Paruchuri, J.~P. Pearce, J.~Marecki, M.~Tambe, F.~Ord{\'o}{\~n}ez, and
  S.~Kraus.
\newblock Efficient algorithms to solve {Bayesian Stackelberg} games for
  security applications.
\newblock In {\em Conference on Artificial Intelligence}, 2008.

\bibitem{ReedP10}
J.~Reed and B.~C. Pierce.
\newblock Distance makes the types grow stronger: a calculus for differential
  privacy.
\newblock {\em ACM Sigplan Notices}, 2010.

\bibitem{serjantov2003towards}
A.~Serjantov and G.~Danezis.
\newblock Towards an information theoretic metric for anonymity.
\newblock In {\em Privacy Enhancing Technologies}, pages 41--53. Springer
  Berlin Heidelberg, 2003.

\bibitem{ShokriTLH11}
R.~Shokri, G.~Theodorakopoulos, J.-Y. Le~Boudec, and J.-P. Hubaux.
\newblock Quantifying location privacy.
\newblock In {\em Proceedings of the IEEE Symposium on Security and Privacy},
  2011.

\bibitem{ShokriTTHL12}
R.~Shokri, G.~Theodorakopoulos, C.~Troncoso, J.-P. Hubaux, and J.-Y. Le~Boudec.
\newblock Protecting location privacy: optimal strategy against localization
  attacks.
\newblock In {\em Proceedings of the ACM conference on Computer and
  communications security}, 2012.

\bibitem{theodorakopoulos2014prolonging}
G.~Theodorakopoulos, R.~Shokri, C.~Troncoso, J.-P. Hubaux, and J.-Y.~L. Boudec.
\newblock Prolonging the hide-and-seek game: Optimal trajectory privacy for
  location-based services.
\newblock In {\em ACM Workshop on Privacy in the Electronic Society (WPES
  2014)}, 2014.

\bibitem{TroncosoD09}
C.~Troncoso and G.~Danezis.
\newblock The bayesian traffic analysis of mix networks.
\newblock In {\em Proceedings of the 16th ACM conference on Computer and
  communications security}, 2009.

\bibitem{Zadeh63}
L.~Zadeh.
\newblock Optimality and non-scalar-valued performance criteria.
\newblock {\em Automatic Control, IEEE Transactions on}, 8, 1963.

\end{thebibliography}
\end{document}